\newtheorem{theorem}{Theorem}
\newtheorem{lemma}{Lemma}
\newtheorem{remark}{Remark}
\begin{document}

\def\lc{\left\lceil}   
\def\rc{\right\rceil}
\def\lf{\left\lfloor}   
\def\rf{\right\rfloor}
\include{header}
\setlength{\topskip}{-3pt}
\addtolength{\belowcaptionskip}{-2mm}
\setlength{\abovecaptionskip}{-2pt}

\title{Analysis of IRS-Assisted Downlink Wireless Networks over Generalized Fading}

\author{Yunli LI, and 
        Young Jin CHUN \IEEEmembership{Member, IEEE}
\thanks{This work was supported in part by the Early Career Scheme under Project 9048208 established under the University Grant Committee (UGC) of the Hong Kong Special Administrative Region (HKSAR), China; in part by the City University of Hong Kong (CityU), Startup Grant 7200618; in part by the CityU, Strategic Research Grant 7005467; and in part by the RMGS Donation Grant 9229080. (Corresponding author: Y. J. Chun).}%
\thanks{Y.~L.~Li and Y.~J.~Chun are with the Department of Electrical Engineering, City University of Hong Kong, Hong Kong, China. Y. J. Chun is also with the Center for Internet of Things, City University of Hong Kong Dongguan Research Institute, Dongguan 523000, China. \\(e-mail: yunlili2-c@my.cityu.edu.hk; yjchun@cityu.edu.hk)}}

\markboth{}%
{Shell \MakeLowercase{\textit{et al.}}: Bare Demo of IEEEtran.cls for IEEE Journals}

\maketitle

\begin{abstract}
Future wireless networks are expected to provide high spectral efficiency, low hardware cost, and scalable connectivity. An appealing option to meet these requirements is the intelligent reflective surface (IRS), which guarantees a smart propagation environment by adjusting the phase shift and direction of received signals. However, the composite channel of IRS-assisted wireless networks, which is composed of a direct link and cascaded link aided by the IRS, has made it challenging to carry out system design and analysis. This motivates us to find tractable and accurate channel modeling methods to model multiple types of channels. To this end, we adopt mixture Gamma distributions to model the direct link, the cascaded link, and the mixture channel. Moreover, this channel modeling method can be applied to various transmission environments with an arbitrary type of fading as the underlying fading of each link. Additionally, a unified stochastic geometric framework is introduced based on this tractable channel model. First, we derived distributions of the cascaded link and the mixture channel by proving multipliability and quadratic form of mixture Gamma distributed channels. Then, we carried out a stochastic geometric analysis of the system performance of the IRS-assisted wireless network with the proposed channel modeling method. Our simulation shows that the mixture Gamma distributed approximation method guarantees high accuracy and promotes the feasibility of system performance analysis of IRS-assisted networks with complicated propagation environments, especially with a generalized fading model. Furthermore, the proposed analytical framework provides positive insights into the system design regarding reliability and efficiency.
\end{abstract}

\begin{IEEEkeywords}
	intelligent reflective surface, mixture Gamma distribution, cascaded channel, mixture channel, generalized fading, stochastic geometry.
\end{IEEEkeywords}

%
\IEEEpeerreviewmaketitle

\section{Introduction}

For 6G wireless communication, such as Terahertz (THz) systems, transformative solutions to a fully connected world are expected to drive the surge for accommodating the complicated propagation environment, boosting spectral efficiency and providing high reliability. When the 6G system mitigating to higher frequency, these requirements are huge challenges due to fast attenuation and weak penetration \cite{boulogeorgos2018terahertz}. One promising approach that emerged recently is the notion of an intelligent communication environment (ICE). ICE is able to control the propagation environment to adapt to the complicated propagation environment, enhance the reliability, and enlarge the coverage cost-effectively \cite{akyildiz20206g}. 

Various technologies have been proposed to achieve ICE, and one popular and practical solution is the intelligent reflective surface \cite{wu2019towards}, which is also known as reconfigurable intelligent surface (RIS) \cite{di2020smart}, and large-scale intelligent surface (LIS) \cite{han2019large}. The IRS consists of a massive number of passive reflective elements on its planar surface and a control part that adjusts each element's phase shift and direction. In contrast to traditional RF chains, the passive IRS elements only reflect signals without additional active processing, which facilitates the IRS to be deployed easily and cost-efficiently. It is worth noting that the passive-IRS potentially achieves a quantum leap improvement for self-interference and noise amplification compared to active relays and surfaces. In other words, IRS is a revolutionary technology that can achieve high spectrum and energy efficiency communications with low costs \cite{wu2021intelligent}. Based on these advantages, we will adopt passive-IRS in the sequel.

\subsection{Related works}

Spurred by the massive popularity of IRS, considerable researches have been undertaken in the latest decades regarding each aspects of IRS. There are relatively sufficient works about the link-level analysis of IRS-assisted wireless communication systems \cite{wu2021intelligent}. In \cite{jia2020analysis}, the direct link from Base Station (BS) to User Equipment (UE) was modeled as Rayleigh fading while links aided by IRS were modeled as Rician fading, and the IRS worked with quasi-static phase shift design. In \cite{atapattu2020reconfigurable}, the authors analyzed the network performance of IRS-assisted two-way communications between two users over Rayleigh fading by approximating the double Rayleigh fading with a Gamma distribution through moment matching. In contrast, network-level research is still scarce. In \cite{lyu2021hybrid}, the network-level performance of IRS-assisted downlink network was analyzed over Rayleigh fading by approximating the cascaded channel as Complex Normal (CN) distribution through Central Limit Theorem (CLT). Additionally, Gamma distribution is introduced to approximate the received signal power. However, the existing works are focused on simple fading models, and the channel models on the cascaded link and the mixture channel are scarce.



\subsection{Motivation}

The previous system performance analysis mainly worked on Rayleigh fading due to its simplicity and tractability \cite{lyu2021hybrid}. Nonetheless, given the diverse range of operating environments of 6G, they may also be subject to clustering of scattered multipath contributions, \textit{i.e.}, propagation characteristics which are quite dissimilar to conventional Rayleigh fading environments \cite{chun2017stochastic}. Aside from small-scale fading, large-scale fading and random shadowing caused by obstacles in the local environment or human body movements can impact link performance via fluctuating the received signals, which can not be ignored in future wireless communications systems, \textit{i.e.}, mm-Wave wireless communications and THz wireless communications \cite{chun2017comprehensive}. As such, it is essential to extend the analysis of IRS-assisted wireless communication systems to generalized fading channels with novel channel modeling methods.

Moreover, the mixture channel between typical UE and its serving BS consists of two types of link: direct link (BS$\rightarrow$UE), and cascaded link (BS$\rightarrow$IRS$\rightarrow$UE). Statistical characterization of the cascaded and mixture channel in IRS-assisted networks involves highly specialized functions, such as Fox-H or Meijer G-function, even with the simplest Rayleigh fading on each individual links, which causes the performance analysis of IRS-assisted wireless network to be challenging. Considerable researches have been conducted to analyze over asymmetric cascaded channels in relay-assisted networks: mixed Rayleigh and Rician \cite{duong2007effect}, mixed Nakagami-$m$ and Rician \cite{gurung2010performance}, mixed $\eta - \mu$ and $\kappa - \mu$ fading channels \cite{6473919}. Furthermore, there are some approximation works on a symmetric cascaded fading channel in MIMO communications: N*Nakagami-$m$ distribution for Nakagami-$m$ fading channels \cite{karagiannidis2007n}. In addition, \cite{di2009comprehensive} analyzed the dual-hop link over generalized fading channels by leveraging properties of Meijer-G function. While significant advances have been made by previous researches, most of the existing literature approximated the cascaded channels by CN distribution based on CLT or modeled the channels with Meijer-G function. Besides, the system performance analysis is mainly based on the ratio of signal power and noise power (SNR), and ignored the interference, which is an essential part in future dense networks. Although \cite{lyu2021hybrid} has considered the interference effect, the channel model adopted is still approximated by CN distributions through CLT, with Rayleigh as the underlying fading model. Therefore, an approximation model for cascaded link and mixture channel with high accuracy for generalized fading models, is critical for evaluating IRS-assisted network system performance metrics of interest, especially for B5G and 6G.


\subsection{Contributions}

Motivated by the above, we emphasize addressing the modeling of cascaded link, mixture channel, and system-level performance analysis for IRS-assisted wireless networks in this work. We extend the research from Rayleigh fading to arbitrary underlying fading types, such as Nakagami-$m$, Rician, $\kappa$ - $\mu$, and $\kappa$ - $\mu$ shadowed fading, which is a generalized channel modeling method fitting to various networks. We also evaluated the performance metrics with a uniform stochastic geometric framework. The main contributions of this work are summarized as below:

\begin{enumerate}
	\item First and foremost, modeling the channel gain tractably for the cascaded link and mixture channel with high accuracy is essential for the analysis of IRS-assisted networks. In this work, we introduced a general channel modeling method for multiple types of channels in IRS-assisted networks utilizing the multipliability and quadratic form of the mixture Gamma distribution. Thus, we approximated the direct channel, cascaded channel and mixture channel by mixture Gamma distributions with accuracy less than $10^{-5}$. This mixture Gamma channel modeling method works for arbitrary underlying fading and includes single channel, double channel, and mixture channel as a special case.
	
	\item Then, we derived the distribution of conditional received signal power, and Laplace transform of the aggregated interference using stochastic geometry under three operation modes: a) one IRS is associated with the typical UE and performs beamforming whilst other related IRSs randomly scattering the received signals; b) all related IRSs randomly scatter signals to the typical UE without beamforming; c) there is no related IRS, and the whole network works as a traditional network. 
	
	\item Next, we introduced a unified analytical framework for the IRS-assisted network performance evaluation based on the proposed mixture Gamma channel modeling method, where interested performance metrics can be expressed as functions of the ratio of signal power and interference power plus noise power (SINR). Furthermore, we  illustrated several performance metrics, such as spectrum efficiency, SINR moments, and outage probability by invoking their corresponding SINR functions. 
	
	\item Finally, we verified our channel model by Monte-Carlo simulation, which illustrated that the proposed channel modeling method fits well for multiple types of channel with high accuracy. As such, the proposed modeling method can be applied to various wireless systems. Our analysis provides insights on system design and further optimization of the IRS-assisted networks.
	
\end{enumerate}

\subsection{Organizations}

The remaining paper is organized as below. In section \ref{sectionSystemM}, we introduced the system model, association policy, and channel models. In section \ref{sectionMG}, we evaluated channel modeling method of the single link, cascaded link, and mixture channel by proving the multipliability and quadratic form of mixture Gamma distributed channels. In section \ref{section:Perf}, we derived the channel power gain and Laplace transforms of the aggregated interference power under three operation modes and introduced a unified stochastic geometric system performance analysis framework for the IRS-assisted network. In section \ref{sectionNumerical}, we provided simulations to verify our theoretical analysis. In section \ref{sectionConclusion}, we concluded the whole work.

\section{SYSTEM MODEL} \label{sectionSystemM}

We consider an IRS-assisted multi-cell wireless network, where the IRSs are deployed to assist the downlink transmission as shown in Fig.~\ref{fig:IRS-assisted muti-cell wireless network (DL)}. The locations of BSs are modeled by an independent two dimensional (2D) homogeneous Poisson Point Process (HPPP), denoted as $\Lambda_{\rm B}$ with node density $\lambda_{\rm B}$. The locations of IRSs and UEs are modeled as independent 2D-HPPPs, denoted as  $\Lambda_{\rm I}$ with density $\lambda_{\rm I}$ and $\Lambda_{\rm U}$ with density $\lambda_{\rm U}$, respectively. Without loss of generality, we assume that a typical UE, denoted by ${\rm UE}_{0}$, is located at the origin and each BS has an infinitely backlogged queue. The channel is assumed to be frequency-flat and constant while the channel may vary over different frequency bands or time slots \cite{lyu2021hybrid}. To facilitate the analysis, we employ orthogonal multiple access, implying no intra-cell interference. We summarized the common notations used in this paper  in Table~\ref{Tab:notations}.

\subsection{BS and IRS association policy }

We adopt a general association model for BS where each UE connects to the BS that provides strongest long term received signal power without small-scale fading, denoted as ${\rm BS}_{0}$, which is equivalent to connecting to the nearest BS. As such, PDF of the distance between ${\rm BS}_{0}$ and ${\rm UE}_{0}$, denoted as $d_{\rm BU}$, could be derived from the void probability of a 2D HPPP. The PDF of $d_{\rm BU}$ is given by 
\begin{equation} 
	f_{d_{\rm BU}}(d)=2\pi \lambda_{\rm B} d e^{-\lambda _{\rm B}\pi  d^{2}} \label{con:PDFofDistanceBSUE}.
\end{equation}

For the IRS association policy, we assume that at most one IRS is associated between ${\rm UE}_{0}$ and ${\rm BS}_{0}$. As \cite{you2022deploy} shows the optimal deployment location for a single associated IRS is in the vicinity of either ${\rm UE}_{0}$ or ${\rm BS}_{0}$. However, the communications suffer severe product path loss when the link distances between nodes are too large. For this reason, we define a service area of each IRS, which is a circle with radius $D_{1}$. Further, we define an interference area, within which the not associated UEs can receive interference signals from the IRS. The radius of this interference area is denoted as $D_{2}$ \cite{lyu2021hybrid}. Since the deployment of a large-scale centralized IRS is not practical, the association policy adopted in this work is connecting the ${\rm UE}_{0}$ with its nearest IRS located within service area, denoted as ${\rm IRS}_{0}$. Based on the distance to ${\rm UE}_{0}$, the $\Lambda_{\rm I}$ is thinned into three small point processes: the serving IRSs (denoted as $\Lambda_{\rm I,S} \triangleq \{ {\rm IRS}_{0} \}$), the interfering IRSs (denoted as $\Lambda_{\rm I,F}$), the noise IRSs (denoted as $\Lambda_{\rm I,N}$). As such, this IRS association policy contains three operation modes in terms of the distance between ${\rm UE}_{0}$ and its nearest IRS:
\begin{itemize}
	\item \textbf{Mode 1.} If the distance between ${\rm UE}_{0}$ and its nearest IRS is less than $D_{1}$, ${\rm UE}_{0}$ associates to its nearest IRS. 
	\item \textbf{Mode 2.} If the distance between ${\rm UE}_{0}$ and its nearest IRS is larger than $D_{1}$ and less than $D_{2}$, ${\rm UE}_{0}$ does not connect with any IRS. The IRSs, whose distance to ${\rm UE}_{0}$ is less than $D_{2}$, randomly scatter any received signals, which contribute to the interference.
	\item \textbf{Mode 3.} If the distance between ${\rm UE}_{0}$ and its nearest IRS is larger than $D_{2}$ , the random scattering can be ignored or treated as an additive Gaussian white noise (AWGN).
\end{itemize}
As illustrated in Fig.~\ref{fig:IRS-assisted muti-cell wireless network (DL)}, for Mode 1, if there is one IRS associated with ${\rm UE}_{0}$, there are two types of links between the serving BS (${\rm BS}_{0}$) and ${\rm UE}_{0}$, including ${\rm BS}_{0} \rightarrow {\rm UE}_{0}$ (the direct link) and ${\rm BS}_{0} \rightarrow {\rm IRS}_{0} \rightarrow {\rm UE}_{0}$ (the cascaded link). According to void probability of the 2D HPPP, the PDF of the inter-node distance across ${\rm IRS}_{0}$ and the ${\rm UE}_{0}$ (denoted as $d_{\rm IU}$) is given by 
\begin{equation} 
	f_{d_{\rm IU}}({d})=2\pi \lambda _{\rm I} {d} e^{-\lambda_{\rm I}\pi {d}^{2}} \label{con:PDFofDistanceIRSUE}.
\end{equation}

To ensure tractability of the analysis, we assume that the distance and its distribution from ${\rm BS}_{0}$ to ${\rm UE}_{0}$ are identical with that from ${\rm BS}_{0}$ to ${\rm IRS}_{j}$, $d_{\rm BI}^{(j)} \approx d_{\rm BU} $, where $d_{\rm BI}^{(j)}$ denotes the distance between ${\rm BS}_{0}$ and ${\rm IRS}_{j}$, which is widely adopted in literature \cite{lyu2021hybrid}. Additionally, in Appendix~\ref{appendix:Distance}, we provide math proof for the first time, by deriving the conditional PDF, CDF, and mean of the distance between ${\rm IRS}_{j}$ and ${\rm BS}_{0}$, denoted as $d_{\rm BI}^{(j)}$. Besides, there is no need to derive the unconditional PDF and CDF of $d_{\rm BI}^{(j)}$, since only the corresponding $d_{\rm BI}^{(0)}$ for each $d_{\rm IU}^{(0)}$ and $d_{\rm BU}$ pair is meaningful. In other words, the absolute position of IRS is meaningless while the relative position for a given UE matters. This result provides positive insights for operators on the deployment of IRS.

\subsection{Channel model}

For simplicity, assume that both the BSs and UEs are equipped with a single antenna while each IRS consists of $N$ reflective elements. Let $h_{\rm BU} = \sqrt{\zeta_{\rm BU}} g_{\rm BU}$ denotes the channel from BS to UE, where $\zeta_{\rm BU}\triangleq \epsilon  d_{\rm BU}^{-\alpha_{\rm BU}}$ denotes the BS-UE path-loss with $\epsilon$ representing the reference channel power gain at a distance of $1$ m,  $d_{\rm BU}$ being the BS-UE distance, and $\alpha_{\rm BU}$ being the corresponding path-loss exponent. Moreover, $g_{\rm BU}$ denotes the small-scale fading channel. Similarly, the BS$\to$IRS and IRS$\to$UE channels, denoted by ${\bf h}_{{\rm BI}}\in\mathbb{C}^{N\times 1}$ and ${\bf h}^H_{{\rm IU}}\in\mathbb{C}^{1\times N}$, respectively, can be modeled as
\begin{align} 
	{\bf h}_{\rm BI}=\sqrt{\zeta_{\rm BI}}{\bf g}_{\rm BI},~~ {\bf h}^H_{\rm IU}=\sqrt{\zeta_{\rm IU}}{\bf g}_{\rm IU}^H,
\end{align}
where $\zeta_{\rm BI}\triangleq\epsilon d_{\rm BI}^{-\alpha_{\rm BI}}$ and $\zeta_{\rm IU}\triangleq\epsilon d_{\rm IU}^{-\alpha_{\rm IU}}$ denotes the BS$\to$IRS and IRS$\to$UE link path-loss, respectively, with $d_{\rm BI}$ ($d_{\rm IU}$) being  the link distance and $\alpha_{\rm BI} (\alpha_{\rm IU})$ being the path-loss exponent\footnote{For ease of notation, we simply use $\alpha$ to represent the path-loss exponent in the sequel for each individual link without causing confusion. }. Moreover, ${\bf g}_{\rm BI}$ (${\bf g}_{\rm IU}^H$) denotes the corresponding small-scale fading channel with $| g_{{\rm BI}, n}|$ and  $|g_{{\rm IU}, n}|$, respectively.

\begin{figure}[t!]
	\centering
	\includegraphics[height=7cm,width=10cm]{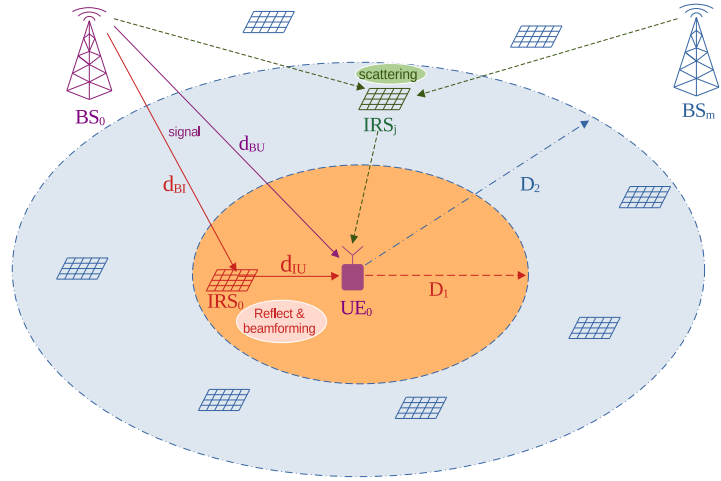}
	\caption{IRS-assisted muti-cell wireless network (DL)}
	\label{fig:IRS-assisted muti-cell wireless network (DL)}
\end{figure}

\makeatletter
\def\thickhline{%
	\noalign{\ifnum0=`}\fi\hrule \@height \thickarrayrulewidth \futurelet
	\reserved@a\@xthickhline}
\def\@xthickhline{\ifx\reserved@a\thickhline
	\vskip\doublerulesep
	\vskip-\thickarrayrulewidth
	\fi
	\ifnum0=`{\fi}}
\makeatother

\newlength{\thickarrayrulewidth}
\setlength{\thickarrayrulewidth}{1.2pt}

\begin{table}[t!]
	\caption{COMMON PARAMETERS}
	\begin{tabular}{m{0.24\linewidth}<{\centering}|m{0.62\linewidth}<{\centering}}%
		\thickhline%
		Parameter & Description\\%
		\hline%
		$ \Lambda_{\rm B}, \Lambda_{\rm I},\Lambda_{\rm U}$&Point processes of BSs, IRSs, and UEs\\
		\hline
		$ \Lambda_{\rm I,S}, \Lambda_{\rm I,F}, \Lambda_{\rm I,N} $&Daughter point processes of $\Lambda_{\rm I}$ and represent serving IRSs, interference IRSs, and noise IRSs\\
		\hline%
		$N$ & The number of IRS elements on each IRS\\%
		\hline%
		$ D_{1}, D_{2} $ & The radius of IRS serving area and interference area\\%
		\hline%
		$ h_{\rm BU} $, $ {\bf h}_{\rm BI} $, $ {\bf h}^H_{\rm IU} $&  The channel of BS$\rightarrow$UE, BS$\rightarrow$IRS, IRS$\rightarrow$UE links\\ %
		\hline%
		$g_{\rm BU}$, ${\bf g}_{\rm BI}$, ${\bf g}_{\rm IU}^H$& Small-scale fading channel of BS$\rightarrow$UE, BS$\rightarrow$IRS, IRS$\rightarrow$UE links\\%
		\hline%
		$ \zeta_{\rm BU}, \zeta_{\rm BI}, \zeta_{\rm IU} $& Path loss of BS$\rightarrow$UE, BS$\rightarrow$IRS, IRS$\rightarrow$UE links\\%
		\hline%
		$\epsilon$ & The reference channel power gain at a distance of $1$ m\\%
		\hline%
		$ H_{\rm BU}, H_{\rm BIU}, H_{\rm S} $& Channel gain of direct path, cascaded path, and the mixture channel\\%
		\hline
		$ M_{1} $, $ M_{2} $, $ I $& The number of mixture Gamma terms \\
		\hline
		$ ( \varepsilon_{i}, \beta_{i}, \xi_{i} ) $& The parameters of mixture Gamma distributions \\
		\hline
		${\mathbf{\underbar{m}}}$, ${\mathbf{\underbar{q}}}$& The index of sufficiency condition expression of multipliability and quadratic form  \\
		\hline
		$ m_{\rm BU} $, $ m_{\rm BI} $, $ m_{\rm IU} $& The fading parameters of BS$\rightarrow$UE, BS$\rightarrow$IRS, IRS$\rightarrow$UE links for Nakagami-$m$ channels \\
		\hline
		$I_{\rm F}$ & The interference   \\
		\hline
		$ K_v(y) $ &  Modified Bessel function of the second kind \\
		\hline
		$ G_{m,n}^{p,q}\left( y| a,b\right) $ &  Meijer-G function \\
		\hline
		$L_n(t)$ & Laguerre polynomial  \\
		\hline
		$t_{i}$ & The $i$-th zero of Laguerre polynomials \\
		\hline
		$\varpi_{i}$ & The $i$-th weight factor of Laguerre polynomials  \\
		\hline
		$\omega_{i}$ & The $i$-th weight factor of the $ i $-th Gamma component  \\
		\hline
		$f_{i}(x)$ & PDF of the $ i $-th Gamma component  \\
		\hline
		$\gamma\left(\cdot, \cdot \right)$ & The incomplete Gamma function \\
		\hline
		$\Gamma\left(\cdot \right)$ & The Gamma function \\
		\thickhline%
	\end{tabular}%
\label{Tab:notations}
\end{table}

For the passive IRS, let ${\bf \Theta}\triangleq{\rm diag}(e^{j\phi_{1}}, \cdots, e^{j\phi_{N}}) \in\mathbb{C}^{N\times N}$ denote its reflection matrix, where $\phi_{n}$ is the phase shift at each element $n\in\mathcal{N}\triangleq\{1, \cdots, N\}$. For the connected ${\rm IRS}_{0}$, its phase shift is adjusted to align with the direct link based on the full CSI obtained, which is given by
\begin{equation}
	[{\bf \Phi}^*]_n = e^{j(\angle{h_{\rm BU}}-\angle{[{\bf h}_{{\rm I} {\rm U}}^H]_n}-\angle{[{\bf h}_{{\rm BI}}]_n})},\forall n.
\end{equation}
However, for those not connected but interfering IRSs, the phase shift is not specifically designed and the IRSs randomly scatter received signals to ${\rm UE}_{0}$.

The received SINR is defined as below
\begin{equation}
    {\rm SINR} = \frac{\mathbf{S}}{\mathbf{I}+\delta^2},
\end{equation}
where $\mathbf{S}, \mathbf{I}, \delta^2$ represents the resived signal power, aggregated interference power, and noise power, respectively. For the given network model, the received signal at ${\rm UE}_{0}$, denoted as $y$, can be generated in four different forms. First, when the ${\Lambda}_{\rm I, {S}}$ and ${\Lambda}_{\rm I, {F}}$ are empty, the received signal only comes from the direct link and we denote this case as $y^{(1)}$. Second, when the ${\Lambda}_{\rm I, {S}}$ is empty and ${\Lambda}_{\rm I, {F}}$ is not empty, the received signal comes from the direct link with interference from ${\Lambda}_{\rm I, {F}}$ and we denote this case as $y^{(2)}$. Third, when both ${\Lambda}_{\rm I, {S}}$ and ${\Lambda}_{\rm I, {F}}$ are not empty, the received signal comes from both the direct link and IRS link with interference from ${\Lambda}_{\rm I, {F}}$ and we denote this case as $y^{(3)}$. Fourth, when the direct link is blocked, the signal can only be transmitted to UE through the IRS with interference from ${\Lambda}_{\rm I, {F}}$ and we denote this case as $y^{(4)}$. We summarized the received signal envelope for each cases below together with the corresponding signal power $\mathbf{S}$ and interference power $\mathbf{I}$ as described below
\begin{subequations}
    \begin{align}
        y^{(1)} = h_{\rm BU}x+n_{0},\quad    &\Rightarrow    \quad 
        \mathbf{S} = |h_{\rm BU}|^2,~
        \mathbf{I} = 0,\label{eqn:line-1} \\
        y^{(2)} = h_{\rm BU}x+\sum_{j\in\Lambda_{\rm I,F}} h_{\rm BIU}^{(j)}x^{'}+n_{0},
        \quad &\Rightarrow \quad 
        \mathbf{S} = |h_{\rm BU}|^2,~
        \mathbf{I} = \sum_{j\in\Lambda_{\rm I,F}} \left|h_{\rm BIU}^{(j)}\right|^2, 
        \label{eqn:line-2} \\
        y^{(3)} = (h_{\rm BU}+h_{\rm BIU}^{(0)})x+\sum_{j\in\Lambda_{\rm I,F}} h_{\rm BIU}^{(j)}x^{'}+n_{0},
        \quad &\Rightarrow \quad 
        \mathbf{S} =\left|h_{\rm BU}+h_{\rm BIU}^{(0)}\right|^2,~
        \mathbf{I} = \sum_{j\in\Lambda_{\rm I,F}} \left|h_{\rm BIU}^{(j)}\right|^2, 
        \label{eqn:line-3} \\
        y^{(4)} = h_{\rm BIU}^{(0)}x+\sum_{j\in\Lambda_{\rm I,F}} h_{\rm BIU}^{(j)}x^{'}+n_{0},
        \quad &\Rightarrow \quad 
        \mathbf{S} =\left|h_{\rm BIU}^{(0)}\right|^2,~
        \mathbf{I} = \sum_{j\in\Lambda_{\rm I,F}} \left|h_{\rm BIU}^{(j)}\right|^2, 
        \label{eqn:line-4}
    \end{align}
    \end{subequations}
where $x$ is the transmitted signal with unit transmit power $P_{\rm t}$, $x^{'}$ is the interference signal, $h_{\rm BIU}^{(j)}=\sum_{n=1}^N |{h}_{{\rm IU},n}| |{h}_{{\rm BI},n}|$ is the channel of ${\rm BS}_{0} \rightarrow{{\rm IRS}_{j}} \rightarrow{{\rm UE}_{0}}$, and $n_{0}$ is the received noise with power $\delta^2$.

\section{Mixture Gamma Approximation of Fading Channels} \label{sectionMG}

In this section, we use properties of mixture Gamma distribution to model the cascaded channel ${\rm BS}_{0} \rightarrow {\rm IRS}_{0} \rightarrow {\rm UE}_{0}$ and combined channel of the direct link ${\rm BS}_{0} \rightarrow {\rm UE}_{0}$ and cascaded link. 

\subsection{Summary of mixture Gamma distribution}

\textbf{Necessity}: In \cite{devore1993constructive}, it is proved that an arbitrary function $f(x)$ with a positive domain $x \in (0, \infty)$ and $lim_{x \to +\infty}f(x) \to 0$, can be accurately approximated as a weighted sum of Gamma distribution as written in (\ref{con:MGapprox}). Given that $f(x)$ is a valid PDF, we refer to (\ref{con:MGapprox}) as the mixture Gamma distribution with parameters $\left(\varepsilon_{i}, \beta_{i}, \xi_{i}\right)$, denoted as $f(x) \sim \mathrm{GM}\left(\varepsilon_{i}, \beta_{i}, \xi_{i}\right)$
\begin{equation}
	\begin{split}
	f(x) &= \sum_{i=1}^{\infty} \omega_{i}f_{i}(x) = \sum_{i=1}^{\infty}\varepsilon_{i}x^{\beta_{i}-1}e^{-\xi_{i}x}	\simeq \sum_{i=1}^{I}\varepsilon_{i}x^{\beta_{i}-1}e^{-\xi_{i}x},	
	\end{split}
	\label{con:MGapprox}
\end{equation}
where $f_{i}(x)=\frac{\xi_{i}^{\beta_{i}}x^{\beta_{i}-1}e^{-\xi_{i}x}}{\Gamma(\beta_{i})}$ is the PDF of a Gamma distribution with parameters $\left(\xi_i, \beta_{i}\right)$, $\Gamma(\cdot)$ is the Gamma function, $\omega_{i}=\varepsilon_{i}\cdot\Gamma(\beta_{i})/\xi_{i}^{\beta_{i}}$ is the weight of the $i$-th term, $I$ is the truncation limit that determines the approximation accuracy, and $\int_{0}^{\infty} f(x) {\rm d}x = 1$ with $f(x) \geq 0$ and $\sum_{i=1}^{\infty} \omega_i = 1$.

\textbf{Sufficiency}: In \cite{devore1993constructive}, the existence of a mixture Gamma function $S_u\left(x\right)$ that uniformly converges to an arbitrary function $f(x)$ is proved as written below
\begin{equation}
    \begin{split}
        &lim_{u \to +\infty} S_u(x) = f(x)  \quad \text{uniformly for } 0 < x < \infty,\\
        \text{where } S_u(x) &= \sum_{k=0}^{\infty} \frac{1}{u}f\left(\frac{k}{u}\right) \mathrm{Gamma}\left(u, k+1\right) = \sum_{k=0}^{\infty} \frac{1}{u}f\left(\frac{k}{u}\right) \cdot \frac{u^{k+1} x^k}{k!} e^{-u x},
    \end{split}
    \label{eq-11-rev}
\end{equation}
and $u$ is an arbitrarily large number that determines the approximation accuracy. 
The equality in (\ref{eq-11-rev}) indicates that an arbitrary function $f(x)$ can be accurately approximated by a mixture of Gamma distributions $\mathrm{Gamma}\left(u, k+1\right)$ with parameters $u$, $k+1$ and weight $\frac{1}{u}f\left(\frac{k}{u}\right)$. Note that (\ref{con:MGapprox}) represents the necessity condition to construct an arbitrary function from a mixture of Gamma distributions, whereas (\ref{eq-11-rev}) corresponds to the sufficiency condition that maps the weight $\omega_i$ and $f_i(x)$. We can find a direct relation between the arbitrary distribution $f(x)$ and tuples $\left(\varepsilon_{i}, \beta_{i}, \xi_{i}\right)$ by using (\ref{con:MGapprox}) and (\ref{eq-11-rev}) as described below
\begin{equation}
	\left(\varepsilon_{i}, \beta_{i}, \xi_{i}\right) = \left(\frac{u^{i-1}}{\Gamma(i)} \cdot f\left(\frac{i-1}{u}\right), i, u  \right).
	    \label{eq-9-rev}
\end{equation}

The statistics of mixture Gamma distribution, including the CDF, moments, and Laplace transform of a mixture Gamma distributed random variable, are derived in \cite{chun2017comprehensive} as follows 
\begin{equation}
    \begin{split}
        F(x) = \sum_{i=1}^{I} \varepsilon_{i}\xi^{-\beta_{i}}\gamma(\beta_{i}, \xi_{i}x),\quad 
        \mathbb{E}\left[ x^l  \right] = \sum_{i=1}^{I}  \varepsilon_{i} \frac{\Gamma(\beta_i+l)}{\xi_i^{\beta_i+l}}, \quad 
        \mathcal{L}(s) = \sum_{i=1}^{I}  \varepsilon_{i} \frac{\Gamma(\beta_i)}{(\xi_i+s)^{\beta_i}},
    \end{split}
\end{equation}
where $\gamma\left(\cdot, \cdot \right)$ is the incomplete Gamma function. The authors in \cite{6059452} proved that majority of the known fading models can be approximated by the mixture Gamma distribution. Particularly, Rayleigh and Nakagami-$m$ fading can be represented by a mixture Gamma distribution with a single term. For an arbitrary fading model, whose PDF can be approximated by a mixture Gamma distribution, $I$ is no need to be larger than 20 with accuracy less than $10^{-5}$ \cite{6059452}. 

However, the IRS-assisted network in Fig. \ref{fig:IRS-assisted muti-cell wireless network (DL)} involves combined channel that is composed of the double-faded, cascaded link through IRS and the direct link. Traditional works on cascaded channel utilized specialized function, such as Fox-H or Meijer G-function, which lack tractability and are hard to gain any insights. To resolve this issue, we adopt the mixture Gamma approximation to model the wireless channels of IRS-assisted network.

\subsection{Properties of mixture Gamma distribution}

\begin{theorem} \label{theorem:Multipliability} \textbf{Multipliability:}
The product distribution of two independent, mixture Gamma distributed random variables $X_{1}$ and $X_{2}$ can be represented by a mixture Gamma distribution $Y=X_{1}X_{2}$ with parameters $\left(\varepsilon_{\mathbf{\underbar{m}}}, \beta_{\mathbf{\underbar{m}}}, \xi_{\mathbf{\underbar{m}}}\right)$ as described below
\begin{small}
    \begin{equation}
        \begin{split}
            &\text{Given } 
            f_{\rm X_1}(x_1) = \sum_{m_{1}=0}^{M_{1}} \omega_{m_{1}} \cdot \mathrm{Gamma}\left(\varepsilon_{m_{1}}, \beta_{m_{1}}\right) \text{ and }
            f_{\rm X_2}(x_2) =\sum_{{m_{2}}=0}^{{M_{2}}} \omega_{m_{2}} \cdot \mathrm{Gamma}\left(\varepsilon_{m_{2}}, \beta_{m_{2}}\right),\\
            &\text{the product distribution of } Y=X_{1}X_{2} \text{ follows }
f_{\rm Y}(y) =  \sum_{\mathcal{C}_{\mathbf{\underbar{m}}}} \varepsilon_{\mathbf{\underbar{m}}} \cdot y^{\beta_{\mathbf{\underbar{m}}}-1} \cdot e^{-y \cdot \xi_{\mathbf{\underbar{m}} }},
        \end{split}
        \label{MGdouble}
    \end{equation}	
\end{small}%
where the summation range $\mathcal{C}_{\mathbf{\underbar{m}}}$ and parameters are defined as  
\begin{equation}
    \begin{split}
  \mathcal{C}_{\mathbf{\underbar{m}}} &= \{0 \leq m_{1} \leq M_1,~ 0 \leq m_2 \leq M_2,~ 1 \leq i \leq I\},\\
  \beta_{\mathbf{\underbar{m}}} &= \varepsilon_{m_{1}},\quad \xi_{\mathbf{\underbar{m}}} = \frac{\beta_{m_{1}}\beta_{m_{2}}}{t_{i}}, \quad  
\varepsilon_{\mathbf{\underbar{m}}} =  \left(\prod_{j=1}^{2} \frac{\omega_{m_j} \beta_{m_j}^{\varepsilon_{m_1}} }{\Gamma\left(\varepsilon_{m_j}\right) } \right) 
\cdot \varpi_{i}{t_{i}^{-\varepsilon_{m_{1}}+\varepsilon_{m_{2}}-1}},
    \end{split}
\end{equation}
$t_{i}$ is the $i$-th root of the Laguerre polynomial $L_n(t)$ and $\varpi_{i}$ is the $i$-th weight of the Gaussian-Laguerre quadrature $\int_0^\infty e^{-t}f(t)dt \approx \sum_{i=1}^{n} \varpi_{i} f\left(t_i\right)$ defined as $\varpi_{i} = \frac{t_i}{\left(n+1\right)^2 L_{n+1}\left(t_i\right)^2}$ \cite{abramowitz1964handbook}.
\end{theorem}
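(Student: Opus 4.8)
The plan is to reduce the claim to a single pair of Gamma components and then reassemble by the mixture weights, exploiting the fact that the product density does not stay within the Gamma family exactly but can be projected back onto it through Gauss--Laguerre quadrature. First I would invoke independence and the bilinearity of the Mellin (multiplicative) convolution: writing $f_{X_1}=\sum_{m_1}\omega_{m_1}\,\mathrm{Gamma}(\varepsilon_{m_1},\beta_{m_1})$ and $f_{X_2}=\sum_{m_2}\omega_{m_2}\,\mathrm{Gamma}(\varepsilon_{m_2},\beta_{m_2})$ as convex combinations of elementary Gamma PDFs with shape $\varepsilon$ and rate $\beta$, the product density distributes as $f_Y(y)=\sum_{m_1}\sum_{m_2}\omega_{m_1}\omega_{m_2}\,f_{m_1,m_2}(y)$, where $f_{m_1,m_2}$ is the density of the product of the two single Gamma components. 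This isolates the only nontrivial computation, namely the product distribution of two independent Gamma variables.

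Second I would evaluate that elementary product density through the change-of-variables formula for a product, $f_{m_1,m_2}(y)=\int_0^\infty f_{X_2}(x)\,f_{X_1}(y/x)\,x^{-1}\,\mathrm{d}x$, deliberately choosing this ordering so that the factor $y^{\varepsilon_{m_1}-1}$ (rather than $y^{\varepsilon_{m_2}-1}$) is pulled outside the integral, matching the target shape $\beta_{\mathbf{\underbar{m}}}=\varepsilon_{m_1}$. Substituting the two Gamma PDFs leaves, up to the normalization prefactor $\beta_{m_1}^{\varepsilon_{m_1}}\beta_{m_2}^{\varepsilon_{m_2}}/[\Gamma(\varepsilon_{m_1})\Gamma(\varepsilon_{m_2})]$, an integral of the form $\int_0^\infty x^{\varepsilon_{m_2}-\varepsilon_{m_1}-1}e^{-\beta_{m_2}x-\beta_{m_1}y/x}\,\mathrm{d}x$.

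Third, and this is the crux, I would rescale by $t=\beta_{m_2}x$ to bring the integral into the canonical Gauss--Laguerre form $\int_0^\infty e^{-t}g(t;y)\,\mathrm{d}t$ with $g(t;y)=t^{\varepsilon_{m_2}-\varepsilon_{m_1}-1}e^{-\beta_{m_1}\beta_{m_2}y/t}$, absorbing the residual powers of $\beta_{m_2}$ so that the prefactor collapses to $\beta_{m_1}^{\varepsilon_{m_1}}\beta_{m_2}^{\varepsilon_{m_1}}/[\Gamma(\varepsilon_{m_1})\Gamma(\varepsilon_{m_2})]$, and then apply the quadrature rule $\int_0^\infty e^{-t}g(t)\,\mathrm{d}t\approx\sum_{i=1}^I\varpi_i\,g(t_i)$. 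Each node $t_i$ contributes a single term proportional to $t_i^{\varepsilon_{m_2}-\varepsilon_{m_1}-1}\,y^{\varepsilon_{m_1}-1}\,e^{-(\beta_{m_1}\beta_{m_2}/t_i)y}$, i.e.\ exactly a Gamma kernel $\varepsilon_{\mathbf{\underbar{m}}}\,y^{\beta_{\mathbf{\underbar{m}}}-1}e^{-\xi_{\mathbf{\underbar{m}}}y}$ with $\beta_{\mathbf{\underbar{m}}}=\varepsilon_{m_1}$ and $\xi_{\mathbf{\underbar{m}}}=\beta_{m_1}\beta_{m_2}/t_i$. Folding in the mixture weights $\omega_{m_1}\omega_{m_2}$ and the quadrature weight $\varpi_i$ reproduces the stated $\varepsilon_{\mathbf{\underbar{m}}}$, and summing over $\mathcal{C}_{\mathbf{\underbar{m}}}=\{0\le m_1\le M_1,\,0\le m_2\le M_2,\,1\le i\le I\}$ exhibits $f_Y$ in the claimed mixture Gamma form.

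I expect the main obstacle to be conceptual rather than computational: the \emph{exact} product of two Gamma densities is a modified Bessel function $K_{\varepsilon_{m_2}-\varepsilon_{m_1}}$ (equivalently a Meijer-$G$), which does not lie in the mixture Gamma family, so closure cannot hold exactly. The essential insight is therefore that Gauss--Laguerre quadrature discretizes the defining integral into a finite sum of pure exponentials $e^{-\xi_{\mathbf{\underbar{m}}}y}$, restoring the mixture Gamma structure at the price of a controllable, $I$-dependent approximation error. Care is needed in two places: ensuring the quadrature is taken in the correct variable so that the surviving monomial in $y$ carries the exponent $\varepsilon_{m_1}-1$, and in the bookkeeping of the two Gamma normalizations so that they collapse to the single power $\beta_{m_j}^{\varepsilon_{m_1}}$ appearing in $\varepsilon_{\mathbf{\underbar{m}}}$; a brief remark that $g(t;y)$ is smooth on $(0,\infty)$ would justify the quadrature accuracy invoked elsewhere in the paper.
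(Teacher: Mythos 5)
Your proposal is correct, and your parameter bookkeeping lands exactly on the theorem's tuple: shape $\beta_{\mathbf{\underbar{m}}}=\varepsilon_{m_1}$, rate $\xi_{\mathbf{\underbar{m}}}=\beta_{m_1}\beta_{m_2}/t_i$, and weight $\omega_{m_1}\omega_{m_2}\,(\beta_{m_1}\beta_{m_2})^{\varepsilon_{m_1}}\varpi_i\,t_i^{\varepsilon_{m_2}-\varepsilon_{m_1}-1}/\bigl[\Gamma(\varepsilon_{m_1})\Gamma(\varepsilon_{m_2})\bigr]$. Your route is, however, genuinely more elementary than the paper's. The paper (Appendix~\ref{appendix:CascadedMG}) first quotes the result of \cite{bhargav2018product} expressing the product density of Gamma mixtures through the Meijer-G function, reduces $G_{0,2}^{2,0}$ to the modified Bessel function $K_{\upsilon_{m_1}-\upsilon_{m_2}}$, and only then reopens $K_{\nu}$ via its integral representation before applying Gauss--Laguerre quadrature. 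Your direct Mellin convolution $f_{m_1,m_2}(y)=\int_0^\infty f_{X_2}(x)f_{X_1}(y/x)x^{-1}\,\mathrm{d}x$ followed by the rescaling $t=\beta_{m_2}x$ produces precisely that same integral, $\int_0^\infty e^{-t}\,t^{\varepsilon_{m_2}-\varepsilon_{m_1}-1}e^{-\beta_{m_1}\beta_{m_2}y/t}\,\mathrm{d}t$, in one step; from the quadrature onward the two arguments coincide. What your version buys: (i) it dispenses with the Meijer-G/Bessel detour entirely; (ii) it never needs the paper's restriction that the phase of $y$ not exceed $\pi/4$, which is an artifact of the Bessel representation and vacuous for channel gains $y>0$; and (iii) it confirms the theorem's exponents as printed, whereas the paper's intermediate display \eqref{eq:fy1} carries $y^{\upsilon_{m_1}}$ and $(\beta_{m_1}\beta_{m_2})^{\upsilon_{m_1}+1}$, an off-by-one inconsistency with the final statement $y^{\varepsilon_{m_1}-1}$ and $(\beta_{m_1}\beta_{m_2})^{\varepsilon_{m_1}}$ that your computation quietly repairs. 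Your closing caveat is also exactly right and stated more honestly than the paper's equality sign: the true product density is a $K_{\nu}$ kernel outside the mixture Gamma family, so the theorem is an $I$-term quadrature approximation rather than exact closure; one could add that for fixed $y>0$ the integrand $g(t;y)=t^{\varepsilon_{m_2}-\varepsilon_{m_1}-1}e^{-\beta_{m_1}\beta_{m_2}y/t}$ is smooth and rapidly vanishing as $t\to 0$, so the Laguerre error is controlled pointwise, though not uniformly as $y\to 0$ when $\varepsilon_{m_2}-\varepsilon_{m_1}-1<0$ --- a caveat the paper also leaves unaddressed.
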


\begin{proof}
    See Appendix \ref{appendix:CascadedMG}.
\end{proof}

Given two independent Gamma distributed random variables $X_1$ and $X_2$, the product distribution of $Y=X_{1}X_{2}$ can be further simplified by substituting $M_1 = M_2 = 0$, $\omega_{m_1} = \omega_{m_2} = 1$ in (\ref{MGdouble}), which is described in the following lemma. 

\begin{lemma}\label{Lem:GamProduct}
The product distribution of two independent Gamma-distributed random variables $X_1$ and $X_2$ is given by 
    \begin{equation}
        \begin{split}
f_{\rm Y}(y) &=  \sum_{i=1}^{I} \varepsilon_{\mathbf{\underbar{m}}} \cdot y^{\beta_{\mathbf{\underbar{m}}}-1} \cdot e^{-y\cdot\xi_{\mathbf{\underbar{m}} }},
        \end{split}
        \label{MGdouble-lemma1-eq1}
    \end{equation}	
where $X_1 \sim \mathrm{Gamma}\left(\varepsilon_{{m_{1}}}, \beta_{{m_{1}}}\right)$, $X_2 \sim \mathrm{Gamma}\left(\varepsilon_{{m_{2}}}, \beta_{{m_{2}}}\right)$, and the parameter tuple $\left(\varepsilon_{\mathbf{\underbar{m}}}, \beta_{\mathbf{\underbar{m}}}, \xi_{\mathbf{\underbar{m}}}\right)$ is defined as follows 
\begin{equation}
    \begin{split}
  \beta_{\mathbf{\underbar{m}}} = \varepsilon_{m_{1}},\quad \xi_{\mathbf{\underbar{m}}} = \frac{\beta_{m_{1}}\beta_{m_{2}}}{t_{i}}, 
  \quad 
\varepsilon_{\mathbf{\underbar{m}}} =  \left(\prod_{j=1}^{2} \frac{\beta_{m_j}^{\varepsilon_{m_1}} }{\Gamma\left(\varepsilon_{m_j}\right) } \right) 
\cdot \varpi_{i}{t_{i}^{-\varepsilon_{m_{1}}+\varepsilon_{m_{2}}-1}}.
    \end{split}
\end{equation}
\end{lemma}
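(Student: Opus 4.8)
The plan is to obtain Lemma~\ref{Lem:GamProduct} directly from Theorem~\ref{theorem:Multipliability}, exploiting the fact that a single Gamma distribution is precisely a one-term mixture Gamma distribution. First I would specialize the general product formula (\ref{MGdouble}) by setting $M_1 = M_2 = 0$, which forces each input density to consist of a single component indexed by $m_1 = 0$ and $m_2 = 0$, with mixing weights $\omega_{m_1} = \omega_{m_2} = 1$ (the only admissible value, since a one-term mixture must have unit weight). Under these choices the composite index set $\mathcal{C}_{\mathbf{\underbar{m}}} = \{0 \le m_1 \le M_1,\, 0 \le m_2 \le M_2,\, 1 \le i \le I\}$ degenerates to $\{m_1 = 0,\, m_2 = 0,\, 1 \le i \le I\}$, so the triple sum in (\ref{MGdouble}) collapses to the single sum over $1 \le i \le I$ that appears in (\ref{MGdouble-lemma1-eq1}).

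I would then propagate these substitutions through the parameter tuple. The shape $\beta_{\mathbf{\underbar{m}}} = \varepsilon_{m_1}$ and the rate $\xi_{\mathbf{\underbar{m}}} = \beta_{m_1}\beta_{m_2}/t_i$ do not involve the weights and are therefore inherited verbatim, whereas the coefficient $\varepsilon_{\mathbf{\underbar{m}}} = \left(\prod_{j=1}^{2} \frac{\omega_{m_j}\beta_{m_j}^{\varepsilon_{m_1}}}{\Gamma(\varepsilon_{m_j})}\right)\varpi_i\, t_i^{-\varepsilon_{m_1}+\varepsilon_{m_2}-1}$ loses the factor $\omega_{m_1}\omega_{m_2} = 1$, yielding exactly the expression stated in the lemma. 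This recovers (\ref{MGdouble-lemma1-eq1}) together with its parameter definitions, so the lemma follows by inheritance with essentially no additional work.

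As a self-contained cross-check that does not presuppose Theorem~\ref{theorem:Multipliability}, I would derive the product density from first principles: writing $f_Y(y) = \int_0^\infty \frac{1}{x} f_{X_2}(x)\, f_{X_1}(y/x)\, {\rm d}x$ for the two Gamma PDFs and collecting the $y$-dependence leaves an integral of the form $\int_0^\infty x^{\varepsilon_{m_2}-\varepsilon_{m_1}-1} e^{-\beta_{m_2} x - \beta_{m_1} y/x}\, {\rm d}x$, which in closed form is a modified Bessel function $K_v(\cdot)$ and is intractable for the subsequent stochastic-geometry analysis. The decisive step is the substitution $t = \beta_{m_2} x$, which converts this integral into the canonical form $\int_0^\infty e^{-t} g(t)\, {\rm d}t$ with $g(t) = t^{\varepsilon_{m_2}-\varepsilon_{m_1}-1} e^{-\beta_{m_1}\beta_{m_2}y/t}$; applying the Gaussian--Laguerre quadrature $\int_0^\infty e^{-t} g(t)\, {\rm d}t \approx \sum_{i=1}^{I} \varpi_i\, g(t_i)$ then produces a finite sum in which each node $t_i$ contributes one term proportional to $y^{\varepsilon_{m_1}-1} e^{-(\beta_{m_1}\beta_{m_2}/t_i)y}$, reproducing the rate $\xi_{\mathbf{\underbar{m}}} = \beta_{m_1}\beta_{m_2}/t_i$. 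The main obstacle is thus not the inheritance, which is bookkeeping, but verifying that the quadrature step is the sole source of approximation and that collecting the prefactors yields precisely $\varepsilon_{\mathbf{\underbar{m}}}$; in particular one must track how $t_i$ enters the rate so that each quadrature node maps to exactly one Gamma component of the resulting mixture.
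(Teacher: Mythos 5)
Your proof takes exactly the paper's route: the paper obtains Lemma~\ref{Lem:GamProduct} precisely by substituting $M_1 = M_2 = 0$ and $\omega_{m_1} = \omega_{m_2} = 1$ into (\ref{MGdouble}) of Theorem~\ref{theorem:Multipliability}, so that the triple sum collapses to the single sum over the Laguerre nodes $1 \leq i \leq I$ and the unit weights drop out of $\varepsilon_{\mathbf{\underbar{m}}}$, which is your inheritance argument verbatim. Your first-principles cross-check is also faithful to the paper, since the Mellin-convolution integral, its reduction to a modified Bessel form, and the Gauss--Laguerre quadrature $\int_0^\infty e^{-t} g(t)\,\mathrm{d}t \approx \sum_{i=1}^{I} \varpi_i\, g(t_i)$ as the sole source of approximation is exactly how the paper proves Theorem~\ref{theorem:Multipliability} in Appendix~\ref{appendix:CascadedMG}, here specialized to single-component mixtures.
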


\begin{theorem}\label{theorem:Mixture}\textbf{Quadratic form:}
Given two independent, mixture Gamma distributed random variables $X^2$ and $Y^2$, the quadratic form $S = (X+Y)^{2}$ follows a mixture Gamma distribution with parameters $\left(\varepsilon_{\mathbf{\underbar{q}}}, \beta_{\mathbf{\underbar{q}}}, \xi_{\mathbf{\underbar{q}}}\right)$
    \begin{equation}
        \begin{split}
            &f_{\rm S}(s) = \sum_{\mathcal{C}_{\mathbf{\underbar{q}}}} \left( \varepsilon_{\mathbf{\underbar{q}_1}} 
            e^{-s \cdot \xi_{\mathbf{\underbar{q}_1} }} - \varepsilon_{\mathbf{\underbar{q}_2}} 
            e^{-s \cdot \xi_{\mathbf{\underbar{q}_2} }} \right)
            \cdot s^{\beta_{\mathbf{\underbar{q}}}-1},\\
            &\text{where } 
            X \sim \mathrm{GM}\left(\varepsilon_{m_1}, \beta_{m_1}, \xi_{m_1}\right), \text{ and }  Y \sim \mathrm{GM}\left(\varepsilon_{m_2}, \beta_{m_2}, \xi_{m_2}\right),
    \end{split}
    \label{thm2-eq17-rev}
\end{equation}
the summation range $\mathcal{C}_{\mathbf{\underbar{q}}}$ and parameter tuples are given by 
\begin{small}
\begin{equation}
    \begin{split}
  \mathcal{C}_{\mathbf{\underbar{q}}} &= \{0 \leq m_{1} \leq M_1,~ 0 \leq m_2 \leq M_2, 0 \leq k_1 \leq 2\beta_{m_2}-1,\\
 &\quad \quad  0 \leq k_2 \leq 2\beta_{m_1}-1+k_1,~ 0 \leq k_3 \leq \infty\},
    \end{split}
    \label{thm2-eq18a-rev}
\end{equation}
\end{small}%
\begin{equation}
    \begin{split}
\beta_{\mathbf{\underbar{q}}} &= \beta_{m_1}+\beta_{m_2}+k_{3}, \quad 
\xi_{\mathbf{\underbar{q}_1}} = \xi_{m_2}, \quad 
\xi_{\mathbf{\underbar{q}_2}} = \xi_{m_1},\\
\varepsilon_{\mathbf{\underbar{q}_1}} &= (-1)^{k_{2}} \chi \cdot \xi_{m_2}^{k_{2}+2k_{3}+1}, \quad  	
\varepsilon_{\mathbf{\underbar{q}_2}} =	\chi \cdot \xi_{m_1}^{k_{2}+2k_{3}+1},    
\end{split}
    \label{thm2-eq18b-rev}
\end{equation}
and $\chi=\frac{\varepsilon_{i}\varepsilon_{j}\tbinom{2\beta_{j}-1}{k_{1}}\tbinom{2\beta_{i}-1+k_{1}}{k_{2}}(-1)^{k_{1}}\xi_{j}^{2\beta_{i}+k_{1}-k_{2}-1}\Gamma(\frac{k_{2}+1}{2})} {\Gamma(\frac{k_{2}+1}{2}+k_{3}+1)(\xi_{i}+\xi_{j})^{2\beta_{i}+k_{1}+k_{3}}}$.
\end{theorem}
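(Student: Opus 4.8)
The plan is to take the two mixture Gamma inputs as the individual link powers and work with the corresponding amplitudes. Since $X^2$ and $Y^2$ are mixture Gamma distributed (physically, the coherently co-phased direct and cascaded powers under the optimal reflection $[{\bf \Phi}^*]$, so that $S=(X+Y)^2$ with $X,Y\ge 0$), a change of variables turns each power PDF into an amplitude PDF of generalized-Gamma type, $f_X(x)=\sum_{m_1}2\varepsilon_{m_1}x^{2\beta_{m_1}-1}e^{-\xi_{m_1}x^2}$ and likewise for $f_Y$. I would first obtain the PDF of the amplitude sum $Z=X+Y$ by convolution, $f_Z(z)=\int_0^z f_X(x)f_Y(z-x)\,dx$, and then recover $f_S(s)=\frac{1}{2\sqrt{s}}f_Z(\sqrt{s})$. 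This last step already explains the factor $s^{\beta_{\mathbf{\underbar{q}}}-1}$ and, because the amplitude exponentials are Gaussian in the argument, why the final rates act linearly on $s$ (giving $e^{-s\xi_{m_1}}$, $e^{-s\xi_{m_2}}$ rather than $e^{-\sqrt{s}\,\xi}$).

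The heart of the argument is therefore the double-faded convolution integral $\int_0^z x^{2\beta_{m_1}-1}(z-x)^{2\beta_{m_2}-1}e^{-\xi_{m_1}x^2-\xi_{m_2}(z-x)^2}\,dx$, summed over the two mixture indices. I would evaluate it through a cascade of expansions that produce the three summation indices in $\mathcal{C}_{\mathbf{\underbar{q}}}$. First, expand $(z-x)^{2\beta_{m_2}-1}$ by the binomial theorem, which is legitimate because $2\beta_{m_2}-1$ is a nonnegative integer in the mixture Gamma construction; this yields the index $k_1$, the coefficient $\binom{2\beta_{m_2}-1}{k_1}$, and the sign $(-1)^{k_1}$. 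Next, rewrite the exponent as $-\xi_{m_2}z^2+2\xi_{m_2}zx-(\xi_{m_1}+\xi_{m_2})x^2$, so that the combined rate $\xi_{m_1}+\xi_{m_2}$ and the exponential $e^{-\xi_{m_2}z^2}$ appear; handling the residual $x$–$z$ coupling by a second binomial/series expansion introduces $k_2$ together with $\binom{2\beta_{m_1}-1+k_1}{k_2}$ and the surviving power $\xi_{m_2}^{\,2\beta_{m_1}+k_1-k_2-1}$. What remains is an integral of the form $\int_0^z (\cdot)^{k_2}e^{-(\xi_{m_1}+\xi_{m_2})(\cdot)^2}\,d(\cdot)$, which is a lower incomplete Gamma function and supplies the half-integer factor $\Gamma\!\left(\tfrac{k_2+1}{2}\right)$ and the power of $\xi_{m_1}+\xi_{m_2}$ visible in $\chi$.

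To land on a genuine mixture Gamma rather than an incomplete-Gamma expression, I would then invoke the convergent series $\gamma(\alpha,w)=e^{-w}\sum_{k_3=0}^{\infty}\frac{\Gamma(\alpha)}{\Gamma(\alpha+k_3+1)}w^{\alpha+k_3}$ with $\alpha=\tfrac{k_2+1}{2}$. This is precisely where the infinite index $0\le k_3\le\infty$, the ratio $\Gamma(\tfrac{k_2+1}{2})/\Gamma(\tfrac{k_2+1}{2}+k_3+1)$, and the extra powers of $z^2$ (hence the $+k_3$ in $\beta_{\mathbf{\underbar{q}}}=\beta_{m_1}+\beta_{m_2}+k_3$) enter. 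After substituting $s=z^2$ and applying the Jacobian, the terms reorganize into $s^{\beta_{\mathbf{\underbar{q}}}-1}$ times exponentials; the two summands carrying rates $\xi_{\mathbf{\underbar{q}_1}}=\xi_{m_2}$ and $\xi_{\mathbf{\underbar{q}_2}}=\xi_{m_1}$ with opposite signs arise from the symmetric roles of $X$ and $Y$ (equivalently from the map $x\mapsto z-x$, which swaps the two fading branches and accounts for the interchange in $\varepsilon_{\mathbf{\underbar{q}_1}},\varepsilon_{\mathbf{\underbar{q}_2}}$ and the $(-1)^{k_2}$).

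I expect the main obstacle to be the closed-form evaluation of this finite-limit convolution: the Gaussian-in-$x^2$ exponentials, unlike the linear exponentials of an ordinary Gamma convolution, together with the cross term $2\xi_{m_2}zx$, rule out a one-line evaluation and force the multi-stage expansion above. The attendant bookkeeping—tracking $k_1,k_2,k_3$, verifying that the binomial ranges truncate as stated, and assembling the single coefficient $\chi$ without sign or index errors—is where the derivation is most error-prone. A secondary point I would need to discharge is justifying the interchange of the infinite $k_3$-summation with the integral and confirming that the resulting signed series is a bona fide PDF, i.e. nonnegative and integrating to unity despite the negative second term.
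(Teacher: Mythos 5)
Your proposal is correct and takes essentially the same route as the paper's proof in Appendix~\ref{appendix:TheoremMixtureChannel}: converting the mixture Gamma powers to amplitude PDFs $f_{\rm X}(x)=2\sum_i \varepsilon_i x^{2\beta_i-1}e^{-\xi_i x^2}$, convolving to get $Z=X+Y$, binomially expanding $(z-x)^{2\beta_j-1}$ (index $k_1$), completing the square so that a second binomial expansion (index $k_2$) reduces the remaining finite-limit integral to lower incomplete Gamma functions, applying the power-series expansion of $\gamma(\cdot,\cdot)$ (index $k_3$) to recover pure exponentials $e^{-\xi_i z^2}$ and $e^{-\xi_j z^2}$, and finally substituting $f_{\rm S}(s)=\tfrac{1}{2}s^{-1/2}f_{\rm Z}(\sqrt{s})$. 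The only cosmetic difference is that you attribute the two signed exponential branches to the $x\mapsto z-x$ symmetry, whereas the paper obtains them by splitting the completed-square Gaussian integral at its peak into the two tails—these are the same computation.
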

\begin{proof}
See Appendix~\ref{appendix:TheoremMixtureChannel}.
\end{proof}

Although the mixture Gamma distributions in (\ref{MGdouble}) and (\ref{thm2-eq17-rev}) involve multiple summations, it is worth noting that these functions still converge extremely fast, thanks to the rapid convergence of the weight terms. In Fig.~\ref{fig:MG_mmse}, we validated that the mixture Gamma distributions achieve an approximation error of less than $10^{-4}$ with only ten terms. 

\begin{remark}
	Since the production distribution of two independent, mixture Gamma random variables is still a mixture Gamma, Theorem~\ref{theorem:Multipliability} can be easily extended to a  multiplication of $K$ independent, mixture Gamma distributed random variables. We introduced a heuristic algorithm in Appendix~\ref{appendix:IterationAlgorithmForChannel} to evaluate the product distribution of $K$ independent, mixture Gamma random variables. Similarly, the distribution of the quadratic form can be easily extended to $K$ independent, mixture Gamma random variables. Hence, the analytical framework derived in this paper can be applied to network environments with multiple IRS association. 
\end{remark}

Some mixture Gamma approximations of single links and cascaded links are provided in Fig.~\ref{fig:table} for ease reference.


\section{Performance analysis} \label{section:Perf}

\subsection{Channel power statics}
For network performance analysis, we assume that the transmit power is one and the amplitude $g_{\rm BU}$, $g_{{\rm BI},n}$ and $g_{{\rm IU},n}$ follow Nakagami-$m$ distribution with parameters $m_{\rm BU}$, $m_{\rm BI}$ and $m_{\rm IU}$, respectively. Let us denote the power terms as follows
\begin{align}
	H_{\rm BU}\triangleq |h_{\rm BU}|^2=\epsilon d_{\rm BU}^{-\alpha} |g_{\rm BU}|^2, \label{Eq:HBU}
\end{align} 
\begin{align}
	H_{\rm BIU}=  \left|{\bf h}^H_{\rm IU}{\bf h}_{\rm BI}\right|^{2} \triangleq |h_{\rm BIU}|^2=\epsilon^2 d_{\rm BI}^{-\alpha}  d_{\rm IU}^{-\alpha}  \left|\sum_{n=1}^N  |{g}_{{\rm IU},n}| |{g}_{{\rm BI},n}|\right|^2,\label{Eq:HBIU}
\end{align} 
\begin{align}
	H_{\rm S}\triangleq \left||h_{\rm BU}|+|h_{\rm BIU}|\right|^2. \label{Eq:HS}
\end{align} 

\subsubsection{Single path}

As $g_{\rm BU}$, $g_{\rm BI}$ and $g_{\rm IU}$ are distributed by the Nakagami-$m$ fading, the power term $H_{\rm BU}$ follows a Gamma distribution, whereas the statistics of $H_{\rm BIU}$ is characterized by the mixture Gamma distribution and the parameter tuples are described in the following lemmas. 

\begin{lemma}\label{Lem:H_BU}$H_{\rm BU}$ follows the Gamma distribution, which can be modeled as a mixture Gamma distribution with $I = 1$ and
		\begin{align}
		\left(\varepsilon_{\rm BU}, \beta_{\rm BU}, \xi_{\rm BU}\right) = \left(\frac{({d_{\rm BU}^{\alpha }m_{\rm BU}  })^{m_{\rm BU}  }}{\epsilon^{m_{\rm BU}  }\Gamma(m_{\rm BU}  )}, m_{\rm BU},  \frac{m_{\rm BU} d_{\rm BU}^{\alpha}}{\epsilon}\right).
	\end{align} 
\end{lemma}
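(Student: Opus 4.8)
The plan is to prove the lemma by a direct transformation of random variables, exploiting the fact that $H_{\rm BU}=\epsilon d_{\rm BU}^{-\alpha}|g_{\rm BU}|^2$ is merely a deterministic positive scaling of the squared Nakagami-$m$ amplitude, and that the square of a Nakagami-$m$ variable is Gamma distributed. Since a single Gamma term is already a degenerate mixture Gamma distribution with $I=1$ (as noted earlier, Nakagami-$m$ admits an exact one-term representation), the whole task reduces to deriving the PDF of $H_{\rm BU}$ in closed form and reading off the triple $(\varepsilon_{\rm BU},\beta_{\rm BU},\xi_{\rm BU})$.

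First I would write the Nakagami-$m$ amplitude density of $|g_{\rm BU}|$ with unit spread (normalized fading power $\mathbb{E}[|g_{\rm BU}|^2]=1$), namely $f(r)=\frac{2m_{\rm BU}^{m_{\rm BU}}}{\Gamma(m_{\rm BU})}r^{2m_{\rm BU}-1}e^{-m_{\rm BU}r^2}$ for $r\ge 0$. Applying the monotone change of variable $p=r^2$ then yields the fading-power density $\frac{m_{\rm BU}^{m_{\rm BU}}}{\Gamma(m_{\rm BU})}p^{m_{\rm BU}-1}e^{-m_{\rm BU}p}$, so that $|g_{\rm BU}|^2\sim\mathrm{Gamma}(m_{\rm BU},\,\text{rate }m_{\rm BU})$. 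Next I would fold in the deterministic path-loss factor $a\triangleq\epsilon d_{\rm BU}^{-\alpha}$ through $H_{\rm BU}=a\,|g_{\rm BU}|^2$; because scaling a Gamma variable by $a>0$ preserves the shape parameter and divides the rate by $a$, the power term obeys $H_{\rm BU}\sim\mathrm{Gamma}(m_{\rm BU},\,\text{rate }m_{\rm BU}/a)$, whose PDF is $\frac{(m_{\rm BU}/a)^{m_{\rm BU}}}{\Gamma(m_{\rm BU})}h^{m_{\rm BU}-1}e^{-(m_{\rm BU}/a)h}$.

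Finally I would match this density term-by-term against the single-term mixture Gamma template $f(h)=\varepsilon\,h^{\beta-1}e^{-\xi h}$ from (\ref{con:MGapprox}) with $I=1$. Identifying the exponent of $h$ gives $\beta_{\rm BU}=m_{\rm BU}$, the exponential rate gives $\xi_{\rm BU}=m_{\rm BU}/a=m_{\rm BU}d_{\rm BU}^{\alpha}/\epsilon$, and the leading constant gives $\varepsilon_{\rm BU}=(m_{\rm BU}/a)^{m_{\rm BU}}/\Gamma(m_{\rm BU})=(d_{\rm BU}^{\alpha}m_{\rm BU})^{m_{\rm BU}}/(\epsilon^{m_{\rm BU}}\Gamma(m_{\rm BU}))$, exactly the claimed tuple. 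There is no genuine obstacle in this lemma: the only care required is correctly tracking the rate parameter through the squaring and the linear scaling (an easily inverted factor $a$), and verifying that the resulting $\varepsilon_{\rm BU}$ coincides with the Gamma normalizing constant so that $\int_0^\infty f_{H_{\rm BU}}(h)\,\mathrm dh=1$ and the $I=1$ representation is exact rather than approximate.
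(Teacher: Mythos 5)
Your proof is correct and follows exactly the route the paper relies on: the paper states this lemma without proof, treating it as the standard fact that the squared Nakagami-$m$ amplitude is $\mathrm{Gamma}(m_{\rm BU},\,\text{rate }m_{\rm BU})$ and that the deterministic path-loss scaling $\epsilon d_{\rm BU}^{-\alpha}$ rescales the rate to $m_{\rm BU}d_{\rm BU}^{\alpha}/\epsilon$, yielding an exact one-term mixture Gamma representation. Your term-by-term matching against the template in (\ref{con:MGapprox}) reproduces the claimed tuple $\left(\varepsilon_{\rm BU},\beta_{\rm BU},\xi_{\rm BU}\right)$ exactly, so there is nothing to add.
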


\begin{lemma}\label{Lem:HBIU}By Theorem~\ref{Lem:GamProduct}, $H_{\rm BIU}$ follows the mixture Gamma distribution as (\ref{MGdouble}) with parameters $\left(\varepsilon, \beta, \xi\right)$ given by 
		\begin{align}\label{eq: MG_para_BIU}
			\begin{split}
				&~ \varepsilon_{{\rm BIU},i} = \frac{(m_{\rm BI}m_{\rm IU})^{m_{\rm BI}  }\varpi_{i}t_{i}^{m_{\rm IU}  -m_{\rm BI}  -1}}{\Gamma(m_{\rm BI}  )\Gamma(m_{\rm IU}  )}\left({\frac{W}{N^2}}\right)^{m_{\rm BI}  }, \\ &~
				\beta_{{\rm BIU},i} = m_{\rm BI}  , \quad
				\xi_{{\rm BIU},i} = \frac{m_{\rm BI}  m_{\rm IU}  }{t_{i}}{\frac{W}{N^2}},
			\end{split}
		\end{align} 
		where $W= \frac{{d^{\alpha}_{\rm BI}}{d^{\alpha}_{\rm IU}}}{\epsilon^2}$ and $I=20$ achieves sufficient approximation error of less than $10^{-5}$.
\end{lemma}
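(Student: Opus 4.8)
The plan is to recognize $H_{\rm BIU}$ as a deterministically scaled product of two independent Gamma variables, so that Lemma~\ref{Lem:GamProduct} applies verbatim, and then to push the scale factor through the resulting mixture-Gamma density. Starting from the definition in~(\ref{Eq:HBIU}),
\[
H_{\rm BIU} = \epsilon^2 d_{\rm BI}^{-\alpha}d_{\rm IU}^{-\alpha}\left(\sum_{n=1}^N |g_{{\rm IU},n}||g_{{\rm BI},n}|\right)^2 ,
\]
I would first peel off the deterministic path-loss block $\epsilon^2 d_{\rm BI}^{-\alpha}d_{\rm IU}^{-\alpha}=1/W$ with $W=\frac{d_{\rm BI}^{\alpha}d_{\rm IU}^{\alpha}}{\epsilon^{2}}$, and then collapse the $N$-element fading sum onto a single representative per-element product, $\sum_{n=1}^N |g_{{\rm IU},n}||g_{{\rm BI},n}| \approx N\,|g_{\rm IU}||g_{\rm BI}|$. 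With this reduction the quantity becomes $H_{\rm BIU}\approx \frac{N^{2}}{W}\,|g_{\rm IU}|^{2}|g_{\rm BI}|^{2}$, i.e.\ a positive constant times a product of two independent squared-Nakagami powers.

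The second step is to feed the two per-element powers into Lemma~\ref{Lem:GamProduct}. Since $g_{{\rm BI},n}$ and $g_{{\rm IU},n}$ are Nakagami-$m$ with unit spread, their powers are Gamma: $|g_{\rm BI}|^{2}\sim\mathrm{Gamma}(m_{\rm BI},m_{\rm BI})$ and $|g_{\rm IU}|^{2}\sim\mathrm{Gamma}(m_{\rm IU},m_{\rm IU})$, consistent with the shape/rate convention already used in Lemma~\ref{Lem:H_BU}. In the slots of Lemma~\ref{Lem:GamProduct} this reads $(\varepsilon_{m_1},\beta_{m_1})=(m_{\rm BI},m_{\rm BI})$ and $(\varepsilon_{m_2},\beta_{m_2})=(m_{\rm IU},m_{\rm IU})$, with the first argument the shape and the second the rate. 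Substituting gives, for the \emph{unscaled} product $|g_{\rm IU}|^{2}|g_{\rm BI}|^{2}$, the shape $\beta=m_{\rm BI}$, the rate $\xi=m_{\rm BI}m_{\rm IU}/t_{i}$, and the weight $\varepsilon=\frac{(m_{\rm BI}m_{\rm IU})^{m_{\rm BI}}}{\Gamma(m_{\rm BI})\Gamma(m_{\rm IU})}\,\varpi_{i}\,t_{i}^{\,m_{\rm IU}-m_{\rm BI}-1}$, where $\{t_i,\varpi_i\}$ are the Gauss--Laguerre nodes and weights inherited from the quadrature inside Lemma~\ref{Lem:GamProduct}.

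Third, I would propagate the scale $c=N^{2}/W$ through the mixture-Gamma density. For a mixture-Gamma variable $Y$ with PDF $\sum_{i}\varepsilon_{i}\,y^{\beta_{i}-1}e^{-\xi_{i}y}$, the change of variables $H=cY$ yields $f_H(h)=\sum_{i}\varepsilon_{i}c^{-\beta_{i}}\,h^{\beta_{i}-1}e^{-(\xi_{i}/c)h}$, i.e.\ the map $(\varepsilon_{i},\beta_{i},\xi_{i})\mapsto(\varepsilon_{i}c^{-\beta_{i}},\beta_{i},\xi_{i}/c)$. Applying this with $c=N^{2}/W$ leaves the shape at $m_{\rm BI}$, sends the rate to $\xi_{{\rm BIU},i}=\frac{m_{\rm BI}m_{\rm IU}}{t_{i}}\frac{W}{N^{2}}$, and multiplies the weight by $(W/N^{2})^{m_{\rm BI}}$, reproducing exactly the tuple $(\varepsilon_{{\rm BIU},i},\beta_{{\rm BIU},i},\xi_{{\rm BIU},i})$ claimed in~(\ref{eq: MG_para_BIU}); the truncation $I=20$ is inherited from the same quadrature and controls the $<10^{-5}$ accuracy.

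The step I expect to be the main obstacle — and the one needing genuine justification rather than bookkeeping — is the collapse of the $N$-term fading sum onto $N$ times a single representative product, since this preserves the first moment of the sum but inflates its variance by a factor of order $N$. I would therefore either present it as the tractable per-element model that underlies the claimed distribution or accompany it with an error bound, while being careful not to confuse it with a law-of-large-numbers or moment-matched ``sum'' model, which would instead produce shape parameters on the order of $Nm_{\rm BI},Nm_{\rm IU}$ rather than the single-element shapes $m_{\rm BI},m_{\rm IU}$ that appear in~(\ref{eq: MG_para_BIU}). Everything downstream is routine, with the only remaining place to slip being the shape-versus-rate convention when matching $|g|^{2}$ to the $(\varepsilon_m,\beta_m)$ slots.
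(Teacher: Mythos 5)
Your proposal is correct and follows essentially the same route the paper takes: the paper offers no separate proof, obtaining the tuple directly from Lemma~\ref{Lem:GamProduct} with $|g_{\rm BI}|^2\sim\mathrm{Gamma}(m_{\rm BI},m_{\rm BI})$, $|g_{\rm IU}|^2\sim\mathrm{Gamma}(m_{\rm IU},m_{\rm IU})$ and the deterministic scale $N^2/W$ pushed through the mixture via $(\varepsilon_i,\beta_i,\xi_i)\mapsto(\varepsilon_i c^{-\beta_i},\beta_i,\xi_i/c)$, exactly as you do, and your substitution reproduces (\ref{eq: MG_para_BIU}) verbatim. The step you flag as the main obstacle --- collapsing $\sum_{n=1}^{N}|g_{{\rm IU},n}||g_{{\rm BI},n}|\approx N|g_{\rm IU}||g_{\rm BI}|$ --- is indeed the (unstated) approximation the paper itself relies on, since only that collapse can yield the single-element shapes $m_{\rm BI},m_{\rm IU}$ together with the $W/N^2$ scaling, so your caveat identifies a gap in the paper's exposition rather than in your argument.
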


\subsubsection{Mixture path}
In the following Lemma, we use Theorem~\ref{theorem:Mixture} to characterize the distribution of the combined channel, given that the channel gain of the individual paths follows a mixture Gamma distribution. 
\begin{lemma}
Given that the channel gain of the direct link and cascaded link follow mixture Gamma distributions, the channel gain of the combined channel, $H_{\rm S}$, follows a mixture Gamma distribution as (\ref{thm2-eq17-rev}) with parameters given by 
	\begin{small}
\begin{equation}
    \begin{split}
  \mathcal{C}_{\mathbf{\underbar{q}}} &= \{1 \leq m_{1} \leq M_1,~ m_2 = 1, 0 \leq k_1 \leq 2m_{\rm BU}-1,\\
 &\quad \quad  0 \leq k_2 \leq 2m_{\rm BI}+k_1-1,~ 0 \leq k_3 \leq \infty\},\\
\beta_{\mathbf{\underbar{q}}} &= \beta_{{\rm BIU},i}+\beta_{\rm BU}+k_{3}, \quad 
\xi_{\mathbf{\underbar{q}_1}} = \xi_{\rm BU}, \quad 
\xi_{\mathbf{\underbar{q}_2}} = \xi_{{\rm BIU},i},\\
\varepsilon_{\mathbf{\underbar{q}_1}} &= (-1)^{k_{2}} \chi \cdot \xi_{\rm BU}^{k_{2}+2k_{3}+1}, \quad  	
\varepsilon_{\mathbf{\underbar{q}_2}} =	\chi \cdot \xi_{{\rm BIU},i}^{k_{2}+2k_{3}+1},    \\
\end{split}
\end{equation}
\end{small}%
	\begin{small}
	\begin{equation}
		\begin{split}
			\chi &= \frac{\varepsilon_{\rm BU}\varepsilon_{\rm BIU,i}\tbinom{2\beta_{\rm BU}-1}{k_{1}}(-1)^{k_{1}}\tbinom{2\beta_{\rm BU}-k_{1}-1}{k_{2}}\xi_{\rm BU}^{2\beta_{\rm BU}+k_{1}-k_{2}-1}\Gamma(\frac{k_{2}+1}{2}) } {\Gamma(\frac{k_{2}+1}{2}+k_{3}+1)(\xi_{{\rm BIU},i}+\xi_{\rm BU})^{2\beta_{\rm BU}+k_{1}+k_{3}}}.
    \end{split}
\end{equation}
\end{small}%

\end{lemma}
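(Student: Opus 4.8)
The plan is to recognize the final statement as a direct specialization of the quadratic-form result in Theorem~\ref{theorem:Mixture}, so that essentially all of the analytic work has already been carried out and only a careful substitution of parameters remains. Writing $X \triangleq |h_{\rm BU}|$ and $Y \triangleq |h_{\rm BIU}|$, the definition~(\ref{Eq:HS}) reads $H_{\rm S} = (X+Y)^2$, which is exactly the quadratic form treated in Theorem~\ref{theorem:Mixture}. Since $X^2 = H_{\rm BU}$ and $Y^2 = H_{\rm BIU}$, my first step would be to invoke Lemma~\ref{Lem:H_BU} and Lemma~\ref{Lem:HBIU} to assert that $X^2$ and $Y^2$ are both mixture Gamma distributed, with $X^2$ degenerating to a single Gamma term ($I=1$) and $Y^2$ carrying $M_1$ components with tuple~(\ref{eq: MG_para_BIU}). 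The two-exponential difference structure of the resulting PDF is then inherited verbatim from~(\ref{thm2-eq17-rev}).

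Before applying the theorem I would verify its one nontrivial hypothesis, namely that $X^2$ and $Y^2$ are independent (conditioned on the link distances). This holds because $H_{\rm BU}$ depends solely on the direct-link fading $g_{\rm BU}$, whereas $H_{\rm BIU}$ depends on the disjoint set of cascaded amplitudes $\{|g_{{\rm BI},n}|, |g_{{\rm IU},n}|\}$; as these fading processes are mutually independent, so are $X^2$ and $Y^2$. I would also emphasize that it is the coherent magnitude addition $H_{\rm S}=(X+Y)^2$, rather than a complex sum, that the phase alignment at ${\rm IRS}_0$ produces, which is precisely what makes the quadratic form the correct object to model.

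With the hypotheses in place, the remaining step is bookkeeping: feed the cascaded (BIU) tuple into one slot of Theorem~\ref{theorem:Mixture} and the direct (BU) tuple into the other, i.e. set $(\varepsilon_{m_1},\beta_{m_1},\xi_{m_1}) = (\varepsilon_{{\rm BIU},i},\beta_{{\rm BIU},i},\xi_{{\rm BIU},i})$ with $\beta_{{\rm BIU},i}=m_{\rm BI}$, and $(\varepsilon_{m_2},\beta_{m_2},\xi_{m_2}) = (\varepsilon_{\rm BU},\beta_{\rm BU},\xi_{\rm BU})$ with $\beta_{\rm BU}=m_{\rm BU}$. Because $H_{\rm BU}$ contributes only one term, the index $m_2$ in the range $\mathcal{C}_{\mathbf{\underbar{q}}}$ collapses to $m_2=1$ while $m_1$ runs over the $M_1$ cascaded components. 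The binomial truncation limits of Theorem~\ref{theorem:Mixture} then read $0\le k_1 \le 2m_{\rm BU}-1$ and $0\le k_2 \le 2m_{\rm BI}+k_1-1$, and the tuple $(\beta_{\mathbf{\underbar{q}}},\xi_{\mathbf{\underbar{q}_1}},\xi_{\mathbf{\underbar{q}_2}},\varepsilon_{\mathbf{\underbar{q}_1}},\varepsilon_{\mathbf{\underbar{q}_2}},\chi)$ follows by renaming $\xi_{m_1},\xi_{m_2},\beta_{m_1},\beta_{m_2}$ into their BIU and BU counterparts.

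Since the statement is a corollary, there is no deep obstacle; the only point demanding real care is the index accounting. In particular, I would double-check which link sits in the $m_1$ versus the $m_2$ slot by matching the summation limits—the appearance of $2m_{\rm BU}-1$ as the ceiling on $k_1$ forces the direct link into the $\beta_{m_2}$ slot—and I would confirm that the finite binomial expansions underlying Theorem~\ref{theorem:Mixture} remain valid, which requires $2m_{\rm BU}$ and $2m_{\rm BI}$ to be integers, guaranteed for integer or half-integer Nakagami-$m$ parameters. As a final sanity check I would confirm that the normalization $\sum \omega_i = 1$ is preserved and that the formally infinite $k_3$-series inherits the rapid convergence already noted for Theorem~\ref{theorem:Mixture}.
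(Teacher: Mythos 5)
Your proposal is correct and takes essentially the same route as the paper, which gives no separate proof for this lemma precisely because it is the specialization you describe: substitute the single-term tuple of Lemma~\ref{Lem:H_BU} for the direct link and the cascaded-link tuple of Lemma~\ref{Lem:HBIU} into the quadratic form of Theorem~\ref{theorem:Mixture}, using the (conditional) independence of the two fading links and the coherent magnitude addition $H_{\rm S}=\left(|h_{\rm BU}|+|h_{\rm BIU}|\right)^{2}$ from \eqref{Eq:HS}. Your slot identification---the direct link in the $\beta_{m_2}$ position, the cascaded components indexed by $m_1$ with $m_2$ collapsing to a single value---is exactly what the summation limits $0\le k_1\le 2m_{\rm BU}-1$ and $0\le k_2\le 2m_{\rm BI}+k_1-1$ force, and your added checks (independence, integrality of $2m_{\rm BU}$ and $2m_{\rm BI}$ for the finite binomial expansions) are sound diligence consistent with the paper's implicit assumptions.
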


\subsection{Laplace transform of the aggregated interference power}

In this subsection, we will derive the Laplace transform of the aggregated interference. The interference power received at the ${\rm UE}_{0}$ from direct links and cascaded links are given by $I_{{\rm F},1}$ and $I_{{\rm F},2}$, respectively
\begin{equation}
	\begin{split}
		I_{{\rm F},1} = &~ \sum_{m \in \Lambda_{\rm B} \setminus\{0\}}  H_{{\rm BU}}^{(m)}  ,\\ I_{{\rm F},2} = &~ \sum_{m \in \Lambda_{\rm B} \setminus\{0\}} \sum_{j\in \Lambda_{\rm I,F}\bigcup\Lambda_{\rm I,S}}  H_{\rm BIU}^{(m,j)} . 
	\end{split}  
\end{equation}
Laplace transforms of the interference power are given by
\begin{align} \label{eq:L_I12}
	\mathcal{L}_ {\rm I_{{\rm F},1}} |_{d_{\rm BU}^{(0)}} =&~  \mathbb{E}[e^{-sI_{{\rm F},1}}] |_{d_{\rm BU}^{(0)}}  =   \mathbb{E}_{\Lambda_{B}\setminus\{0\}}\left\{ e^{-s H_{\rm BU} } \right\}\Big|_{d_{\rm BU}^{(0)}} , \notag \\ 
	\mathcal{L}_{\rm I_{{\rm F},2}}| _{d_{\rm BIU}^{(0)}}= &~  \mathbb{E}_{\Lambda _{B}\setminus\{0\},\Lambda_{\rm I,F}\bigcup\Lambda_{\rm I,S}}\left\{ e^{-s \eta H_{\rm BIU}} \right\}\Big|_{d_{\rm BU}^{(0)}} ,
\end{align}
where $d_{\rm BU}^{(m)}$ is the distance from $ {\rm BS}_{m}$ to $ {\rm UE}_{0}$, $\eta = \mathbb{E}\left[{d_{\rm BI}^{(m,j)}}^{-\alpha}\right]$, $d_{\rm BI}^{(m,j)}$ is the distance from ${\rm BS}_{m}$ to ${\rm IRS}_{j}$, and we assumed $d_{\rm BU} \simeq d_{\rm BI}$.

If $\Lambda_{\rm I,F}$ is empty, the aggregated interference and its corresponding Laplace transform are given by 
\begin{equation}\label{eq:L_IFo}
    \begin{split}
        I_{{\rm F},\rm o} &= I_{{\rm F},1}, \quad 
        \mathcal{L}_{ \rm I_{{\rm F},\rm o}} |_{d_{\rm BU}^{(0)}} = \mathbb{E}[e^{-sI_{{\rm F},1}}] |_{d_{\rm BU}^{(0)}}  =  \mathcal{L}_{\rm I_{ {{\rm F},1}}} |_{d_{\rm BU}^{(0)}}.        
    \end{split}
\end{equation}
If $\Lambda_{\rm I,F}$ is not empty, the aggregated interference and its corresponding Laplace transform are given by 
\begin{equation}\label{eq:L_IFa}
        I_{{\rm F},\rm a} = I_{{\rm F},1} + I_{{\rm F},2}, \quad
        \mathcal{L}_{ \rm I_{{\rm F},\rm a}} |_{d_{\rm BU}^{(0)}} = \mathcal{L}_ {\rm I_{{\rm F},1}} |_{d_{\rm BU}^{(0)}} \cdot\mathcal{L}_ {\rm I_{{\rm F},2}} |_{d_{\rm BU}^{(0)}} . 
\end{equation}
The closed form expressions of (\ref{eq:L_IFo}) and \eqref{eq:L_IFa} are given in (\ref{eq:L_I1_appen}), (\ref{eq:L_I2_appen}) and derived in Appendix \ref{appendix:LaplaceI}. 

The CDF of the aggregated interference power can be numerically obtained by taking the inverse Laplace transform of $\mathcal{L}_{ \rm I|_{d_{\rm BU}^{(0)}}}(s)$ and MATLAB offers software library to evaluate the operation as follows
\begin{equation}
	F_{ \rm I_{\rm F}|_{d_{\rm BU}^{(0)}}}(x) = \mathcal{L}^{-1}\left[ \frac{1}{s} \mathcal{L}_{ \rm I_{\rm F}|_{d_{\rm BU}^{(0)}}}(s) \right] (x). \label{LaplaceICDF} 
\end{equation}
 



\subsection{Analytical framework}

In this subsection, we adopt an analytical framework to assess the system performance metrics by using stochastic geometry. The original idea was proposed by Hamdi in \cite{hamdi2007useful} for Nakagami-$m$ fading, later in \cite{chun2017stochastic} for $\kappa$ - $\mu$ and $\eta$ - $\mu$ fading, and in \cite{chun2017comprehensive} for $\kappa$ - $\mu$ shadowed fading, which we further extend to IRS-assisted networks with mixture Gamma distributed channels. With the proposed method, any performance measures can be evaluated and represented as a function of SINR $g(\rm SINR)$, including the spectral efficiency, moments of SINR, and outage probability. 

\begin{theorem} \label{theoremPerfAnalysis}
	For a network with mixture Gamma distributed channels, whose received signal can be modeled as a mixture Gamma distribution with tuple $ (\varepsilon_{i},\beta_{i},\xi_{i})$, $\mathbb{E}[g(\rm SINR)]$ is given by
	\begin{equation}
			\mathbb{E}[g(\rm SINR)] =  \int_{0}^{\infty} {g(\rm SINR)f_{\rm S}(s)}\mathrm{d}s = \sum_{i=0}^{I} \varepsilon_{i} \Gamma(\beta_{i}) {\xi_{i}}^{-\beta_{i}}  \int_{0}^{\infty} {g_{\beta_{i}}(z) e^{-\delta^{2} {\xi_{i}} z}} \mathcal{L}_{\rm I_{\rm F}}({\xi_{i}} z)  \mathrm{d}z ,
			\label{eq-32-thm3-rev}
	\end{equation}
	where $g_{\beta_{i}}(z)$ is defined as
	\begin{equation}
		\begin{split}
			g_{{\beta_{i}}}(z) = \frac{1}{\Gamma({\beta_{i}})}\frac{\mathrm{d}^{{\beta_{i}}}}{\mathrm{d}z^{{\beta_{i}}}}g(z) .
		\end{split}
	\end{equation}
	\end{theorem}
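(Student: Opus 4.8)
The plan is to condition on the aggregate interference, exploit the independence between the desired-signal power $\mathbf{S}$ and the interference-plus-noise term $Y \triangleq I_{\rm F}+\delta^{2}$, and reduce the claim to a single per-component identity of the Hamdi type \cite{hamdi2007useful,chun2017stochastic,chun2017comprehensive}. First I would write, using the tower rule together with $\mathbf{S}\perp Y$,
\begin{equation}
\mathbb{E}[g(\mathrm{SINR})] = \mathbb{E}_{Y}\left[\int_{0}^{\infty} g\left(\frac{s}{Y}\right) f_{\rm S}(s)\,\mathrm{d}s\right],
\end{equation}
and substitute the mixture Gamma density $f_{\rm S}(s)=\sum_{i}\varepsilon_{i}s^{\beta_{i}-1}e^{-\xi_{i}s}$ from (\ref{con:MGapprox}). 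Because the mixture weights decay rapidly (the point emphasized after Theorem~\ref{theorem:Mixture}), Fubini together with dominated convergence justifies exchanging the sum over $i$, the expectation over $Y$, and the integral, giving $\mathbb{E}[g(\mathrm{SINR})]=\sum_{i}\varepsilon_{i}\,\mathbb{E}_{Y}\big[\int_{0}^{\infty} g(s/Y)\,s^{\beta_{i}-1}e^{-\xi_{i}s}\,\mathrm{d}s\big]$.

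Next, for each component I would apply the change of variable $z=s/Y$, which turns the inner integral into $Y^{\beta_{i}}\int_{0}^{\infty} g(z)\,z^{\beta_{i}-1}e^{-\xi_{i}Yz}\,\mathrm{d}z$, and then invoke the core identity that trades the Gamma weighting $z^{\beta_{i}-1}$ for the $\beta_{i}$-th order derivative of the metric. Writing $c=\xi_{i}Y$, the required identity is
\begin{equation}
\int_{0}^{\infty} g_{\beta_{i}}(z)\,e^{-cz}\,\mathrm{d}z = \frac{c^{\beta_{i}}}{\Gamma(\beta_{i})}\int_{0}^{\infty} g(z)\,z^{\beta_{i}-1}e^{-cz}\,\mathrm{d}z, \qquad g_{\beta_{i}}(z)=\frac{1}{\Gamma(\beta_{i})}\frac{\mathrm{d}^{\beta_{i}}}{\mathrm{d}z^{\beta_{i}}}g(z),
\end{equation}
which I would establish by repeated (and, for non-integer $\beta_{i}$, fractional Riemann--Liouville) integration by parts, discarding the boundary contributions under the standing regularity assumptions on $g$ (vanishing of $g$ and of its relevant lower-order derivatives/fractional integrals at the origin, plus suitable decay at infinity), exactly the conditions the metrics of interest---spectral efficiency, SINR moments, and outage probability---are chosen to satisfy. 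Substituting this identity and recalling that $Y^{\beta_{i}}=(c/\xi_{i})^{\beta_{i}}$ collapses the $Y^{\beta_{i}}$ prefactor and reproduces the constant $\varepsilon_{i}\Gamma(\beta_{i})\xi_{i}^{-\beta_{i}}$.

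Finally, I would pull the expectation over $Y$ back inside the $z$-integral (Fubini once more) so that the only $Y$-dependence is the exponential $e^{-\xi_{i}zY}$, and use the independence of the interference and the noise to factor
\begin{equation}
\mathbb{E}_{Y}\!\left[e^{-\xi_{i}zY}\right] = e^{-\delta^{2}\xi_{i}z}\,\mathbb{E}_{I_{\rm F}}\!\left[e^{-\xi_{i}zI_{\rm F}}\right] = e^{-\delta^{2}\xi_{i}z}\,\mathcal{L}_{\rm I_{\rm F}}(\xi_{i}z),
\end{equation}
with $\mathcal{L}_{\rm I_{\rm F}}$ the Laplace transform of the aggregated interference derived in (\ref{eq:L_IFo})--(\ref{eq:L_IFa}). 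Summing over $i$ then yields (\ref{eq-32-thm3-rev}).

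I expect the main obstacle to be the core derivative identity, specifically its extension from integer to arbitrary real $\beta_{i}$ through fractional calculus and the careful verification that the associated boundary terms vanish for each admissible metric $g$; the measure-theoretic interchanges (justified by the fast decay of the mixture weights and the exponential kernels) are comparatively routine, as is the factorization of $\mathcal{L}_{\rm I_{\rm F}}$ once the per-component reduction is in place.
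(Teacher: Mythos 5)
Your proposal follows essentially the same route as the paper's own proof in Appendix~\ref{appendix:Perf}: the change of variable $z = s/(I_{\rm F}+\delta^{2})$, the integration-by-parts identity trading the weight $z^{\beta_i-1}$ for the derivative $g_{\beta_i}(z)$ with vanishing boundary terms, and the final factorization $\mathbb{E}\bigl[e^{-\xi_i z (I_{\rm F}+\delta^{2})}\bigr] = e^{-\delta^{2}\xi_i z}\,\mathcal{L}_{\rm I_{\rm F}}(\xi_i z)$. If anything you are more careful than the paper, whose boundary-term expression $\sum_{k=0}^{\beta_i-1} g_k(z)\,b^{\beta_i-k-1}e^{-bz}\big|_{0}^{\infty}$ implicitly assumes integer $\beta_i$, whereas you explicitly flag the fractional (Riemann--Liouville) extension and the Fubini interchanges.
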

	\begin{proof}
		See Appendix \ref{appendix:Perf}.
	\end{proof}
	
	In the following, we utilize Theorem \ref{theoremPerfAnalysis} and (\ref{eq-32-thm3-rev}) to evaluate several system performance metrics of interest by invoking their SINR functions.
	\subsubsection{Spectral efficiency}
	Spectral efficiency is given by \cite{jo2012heterogeneous}
	\begin{equation} \label{spectralEfficiency}
		\mathcal{R} = \mathbb{E}[\ln(1+\rm SINR)].
	\end{equation}
	By substituting $g(z)=\ln(1+z)$ and $g_{\beta_{i}}(z)$ to (\ref{eq-32-thm3-rev}) \cite{hamdi2007useful}
	\begin{equation} \label{eq:gz}
		\begin{split}
			g_{\beta_{i}}(z) =  \frac{1}{\Gamma(\beta_{i})}\frac{\mathrm{d}^{\beta_{i}}}{\mathrm{d}z^{\beta_{i}}}g(z)= \frac{1}{z}\left( 1-\frac{1}{(1+z)^{\beta_{i}}} \right),
		\end{split} 
	\end{equation}
	the spectral efficiency of an IRS-assisted wireless network is evaluated as follows 
	\begin{equation}
		\begin{split}
			\mathcal{R} =  \sum_{i=0}^{I} \varepsilon_{i} \Gamma(\beta_{i}) {\xi_{i}}^{-\beta_{i}}  \int_{0}^{\infty} { {\frac{1}{z}\left( 1-\frac{1}{(1+z)^{\beta_{i}}} \right) } \frac{\mathcal{L}_{\rm I_{\rm F}}({\xi_{i}} z)}{e^{\delta^{2}{\xi_{i}} z}} }  \mathrm{d}z .
		\end{split} 
	\end{equation}

	\subsubsection{Moments of SINR}
	
	The moments of the SINR $\mathbb{E}[{\rm SINR}^{l}]$ can be derived by substituting $g(z)=z^{l}$ and $g_{\beta_{i}}(z)$ to (\ref{eq-32-thm3-rev})
	\begin{equation}
		g_{\beta_{i}}(z) =  \frac{1}{\Gamma(\beta_{i})}\frac{\mathrm{d}^{\beta_{i}}}{\mathrm{d}z^{\beta_{i}}}g(z)=\frac{\Gamma(\beta_{i}+l)}{\Gamma(l)\Gamma(\beta_{i})}z^{l-1}.
	\end{equation}
	Then, the moments of the SINR is evaluated as follows \cite{chun2017comprehensive}
	\begin{equation}
		\begin{split}
			\mathbb{E}[{\rm SINR}^{l}] =  \sum_{i=0}^{I} \varepsilon_{i} \Gamma(\beta_{i}) {\xi_{i}}^{-\beta_{i}}  \int_{0}^{\infty} { { \frac{\Gamma(\beta_{i}+l)}{\Gamma(l)\Gamma(\beta_{i})}z^{l-1} } e^{-\delta^{2}\xi_{i} z}} \mathcal{L}_{\rm I_{\rm F}}({\xi_{i}} z)  \mathrm{d}z  .
		\end{split} 
	\end{equation}
	
	\subsubsection{Outage probability}
	
	The outage probability is defined as written below, which is averaged over the link distance
	\begin{equation}
		\begin{split}
			P_{\rm outage}= &1 -  \mathbb{P}\{{\rm SINR} > \tau \} = 1-\mathbb{E}\left[ \mathbb{P} \left( I_{\rm F} < \frac{S}{\tau}-\delta^{2} \right) \right] , \label{con:outageP}
		\end{split} 
	\end{equation}	
	for a given SINR threshold $\tau$. By substituting \eqref{LaplaceICDF} into \eqref{con:outageP}, the outage probability can be further simplied to 
	\begin{equation}
		P_{\rm outage} = 1- \mathcal{L}^{-1}\left[ \frac{1}{s} \mathcal{L}_{I_{\rm F}|_{d_{\rm BU}}^{(0)}}(s) \right] \left(\frac{S}{\tau}-\delta^{2}\right). \label{OutageLaplaceInverse} 
	\end{equation}
	The CDF of the interference can be evaluated by using the Gil-Pelaez's inversion as described below
	\begin{equation}
	    \begin{split}
				\mathbb{P}(I_{\rm F}<x) =&~ \frac{1}{2} - \frac{1}{\pi} \int_{0}^{\infty} {\frac{Im\{ e^{itx}\mathcal{L}_{\rm I_{F}}(it) \}}{t} \mathrm{d}t},\quad i = \sqrt{-1} \\ 
				=&~ \frac{1}{2}+\int_{0}^{\infty} {Im \left\{ \sum_{i=1}^{I} \varepsilon_{i} \frac{\Gamma(\beta_{i})}{(\xi_{i}-it)^{\beta_{i}}} \phi(x) \right\} \mathrm{d}t},\\
				\phi(x) \triangleq&~ \int_{0}^{\infty} {\exp\left[ i\delta^{2}\lambda_{\rm I}^{-\frac{\alpha}{2}}xt^{\frac{\alpha}{2}} - \pi t \psi(ix) \right]} \mathrm{d}t, \\
				\psi(z) \triangleq&~ \mathbb{E}_{\rm S}\left[ {}_{1}F_{1} \left[ ~{\left.{\!\genfrac..{0pt}{}{-\frac{2}{\alpha}}{1-\frac{2}{\alpha}}}~\right |zH_{\rm S}} \right] \right],  	        
	    \end{split}
	\end{equation}
	where we used \cite[eq.(4)]{di2014stochastic}.
	
	\section{NUMERICAL RESULTS} \label{sectionNumerical}
	
	\begin{figure*}[t!]
		\centering
		\subfigure[PDF of mixture Gamma approximation of the cascaded link.]{
			\begin{minipage}[t]{0.49\textwidth}
				\includegraphics[width=\linewidth]{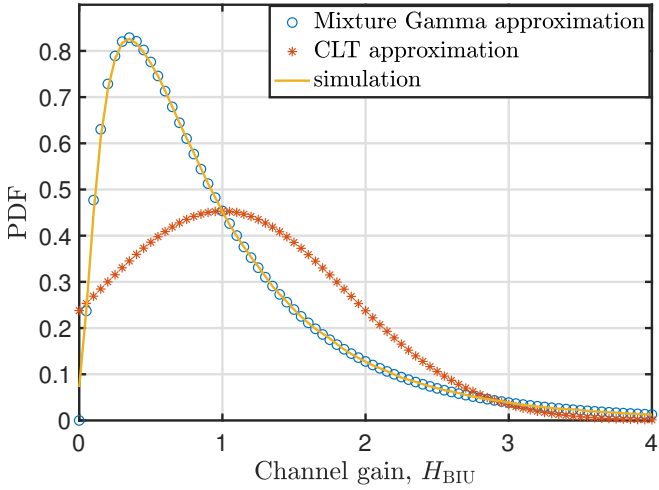}
			\end{minipage}
			\label{fig:MG_double}
		}~
		\subfigure[MMSE of mixture Gamma approximation.]{
			\begin{minipage}[t]{0.49\textwidth}
				\includegraphics[width=\linewidth]{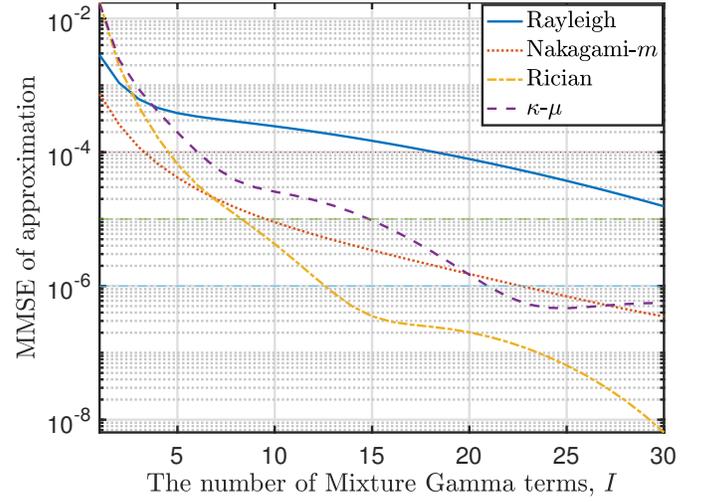}
			\end{minipage}
			\label{fig:MG_mmse}
		}
		\centering
		\subfigure[mixture Gamma approximation of the mixture channel.]{
			\begin{minipage}[t]{0.49\textwidth}
				\includegraphics[width=\linewidth]{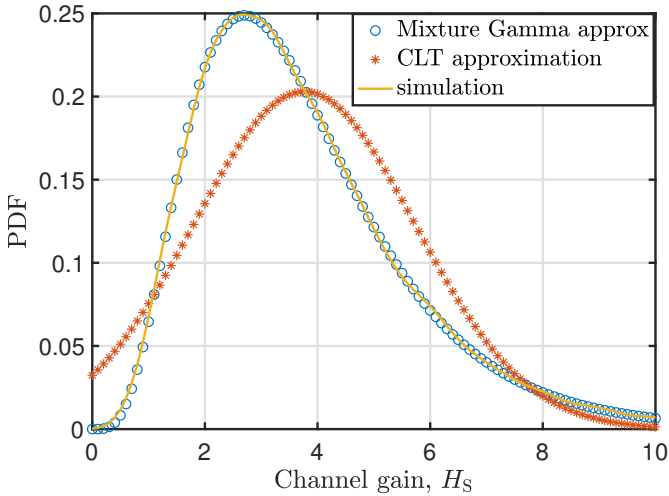}
			\end{minipage}
			\label{fig:MG_mixture}
		}~
		\subfigure[PDF of mixture Gamma approximation of all kinds of channel in IRS-assisted networks.]{
			\begin{minipage}[t]{0.49\textwidth}
				\includegraphics[width=\linewidth]{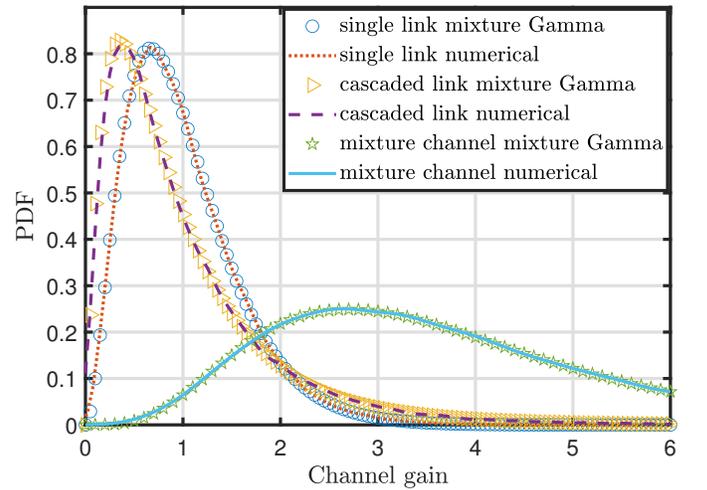}
			\end{minipage}
			\label{fig:MG_all_link}
		}
		\caption{Channel modeling of the channel gains.}
		\label{fig:MG_approx}
		\vspace{1em}
	\end{figure*}

	\begin{figure*}[t!]
		\centering
		\subfigure[Signal power CDF with different distance of IRS-UE.]{
			\begin{minipage}[t]{0.49\textwidth}
				\includegraphics[width=\linewidth]{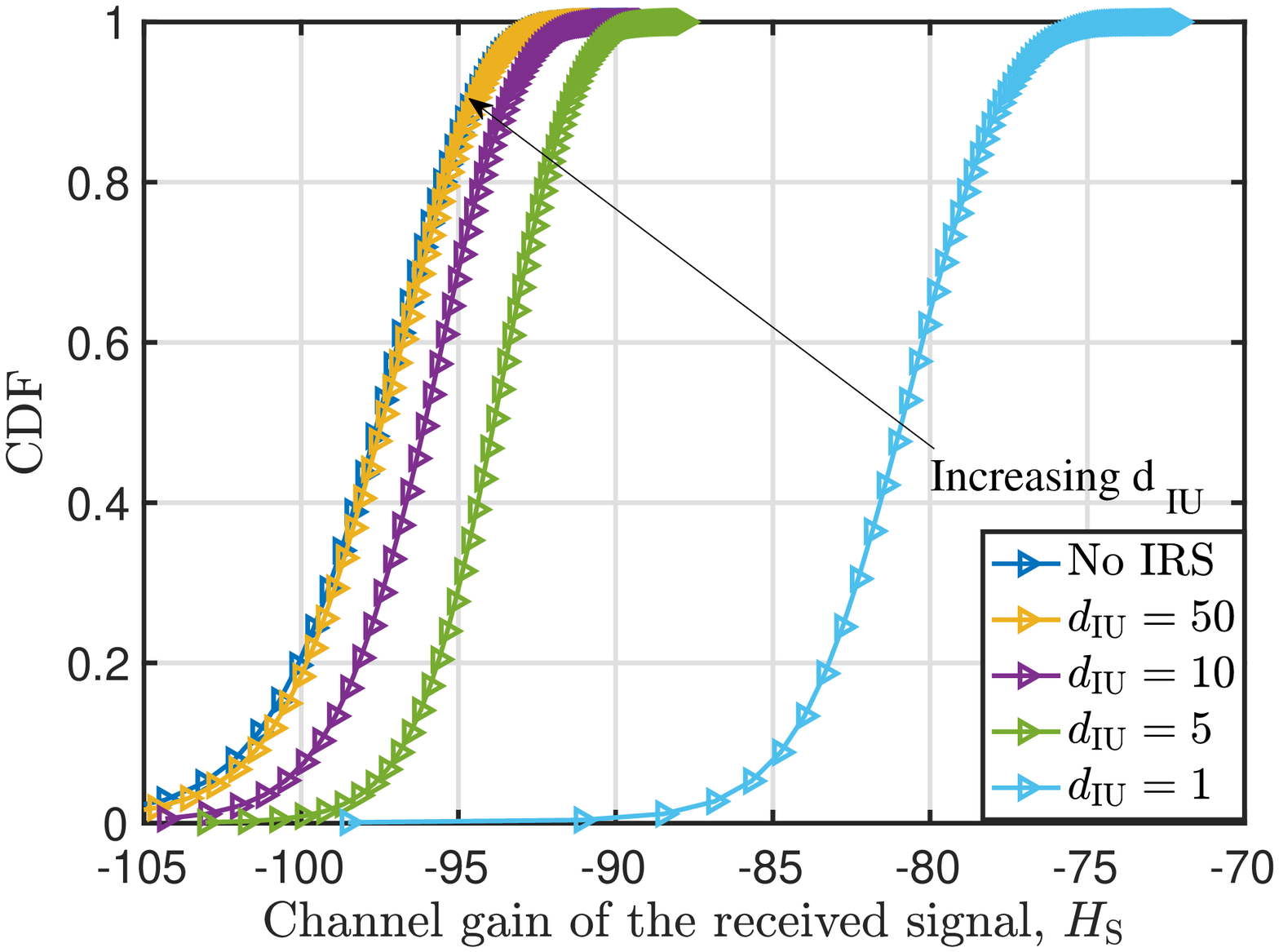}
			\end{minipage}
			\label{fig:sigP_d_IU}
		}~
		\subfigure[Tradeoff between the distance of IRS-UE and the number of IRS elements.]{
			\begin{minipage}[t]{0.49\textwidth}
				\includegraphics[width=\linewidth]{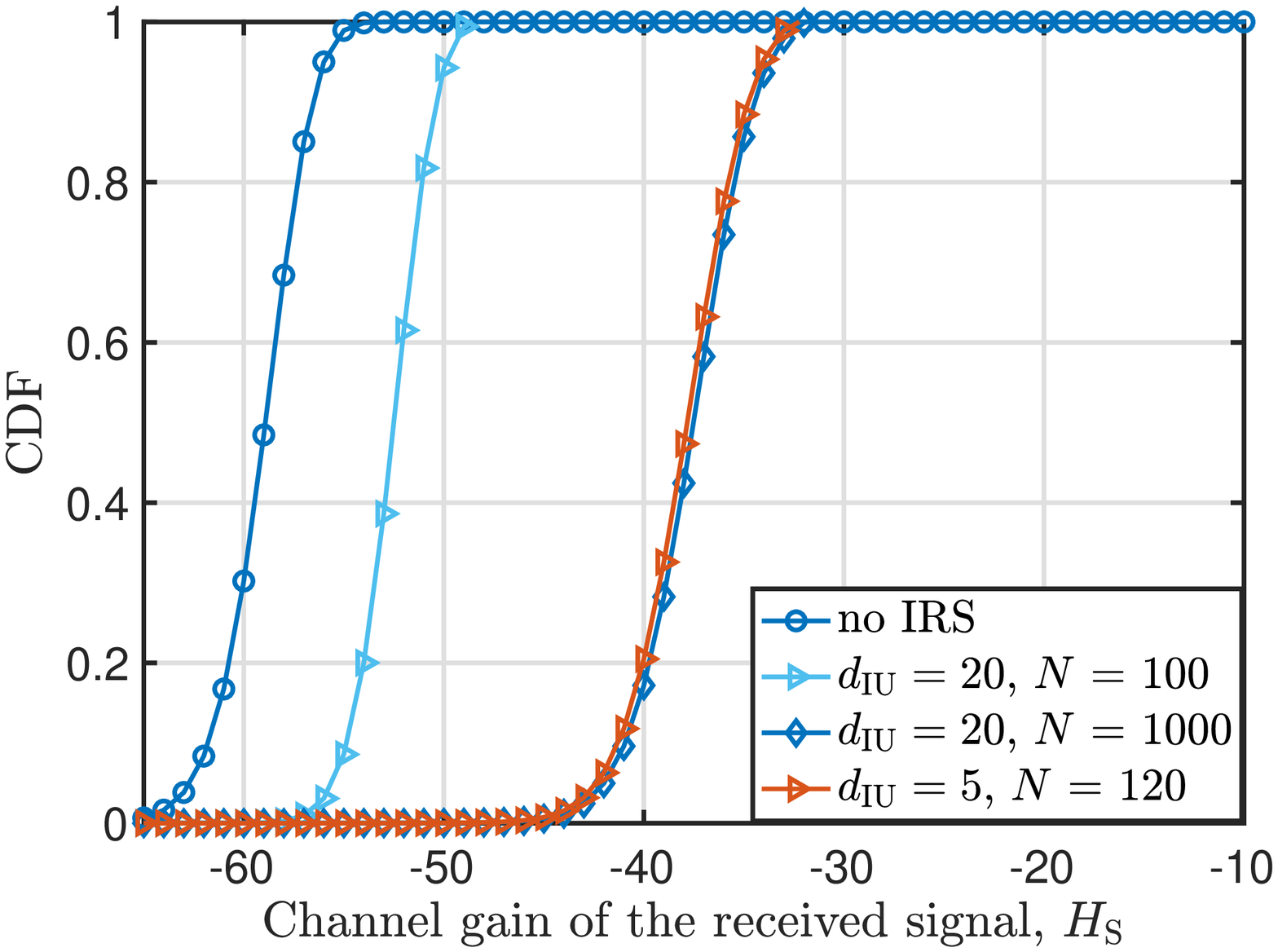}
			\end{minipage}
			\label{fig:sigP_tradeoffND}
		}
		\caption{The received signal power distribution given link distances.}
		\label{fig:sigP}
		\vspace{1em}
	\end{figure*}

	\begin{figure*}[t!]
		\centering
		\subfigure[Spectral Efficiency.]{
			\begin{minipage}[t]{0.49\textwidth}
				\includegraphics[width=\linewidth]{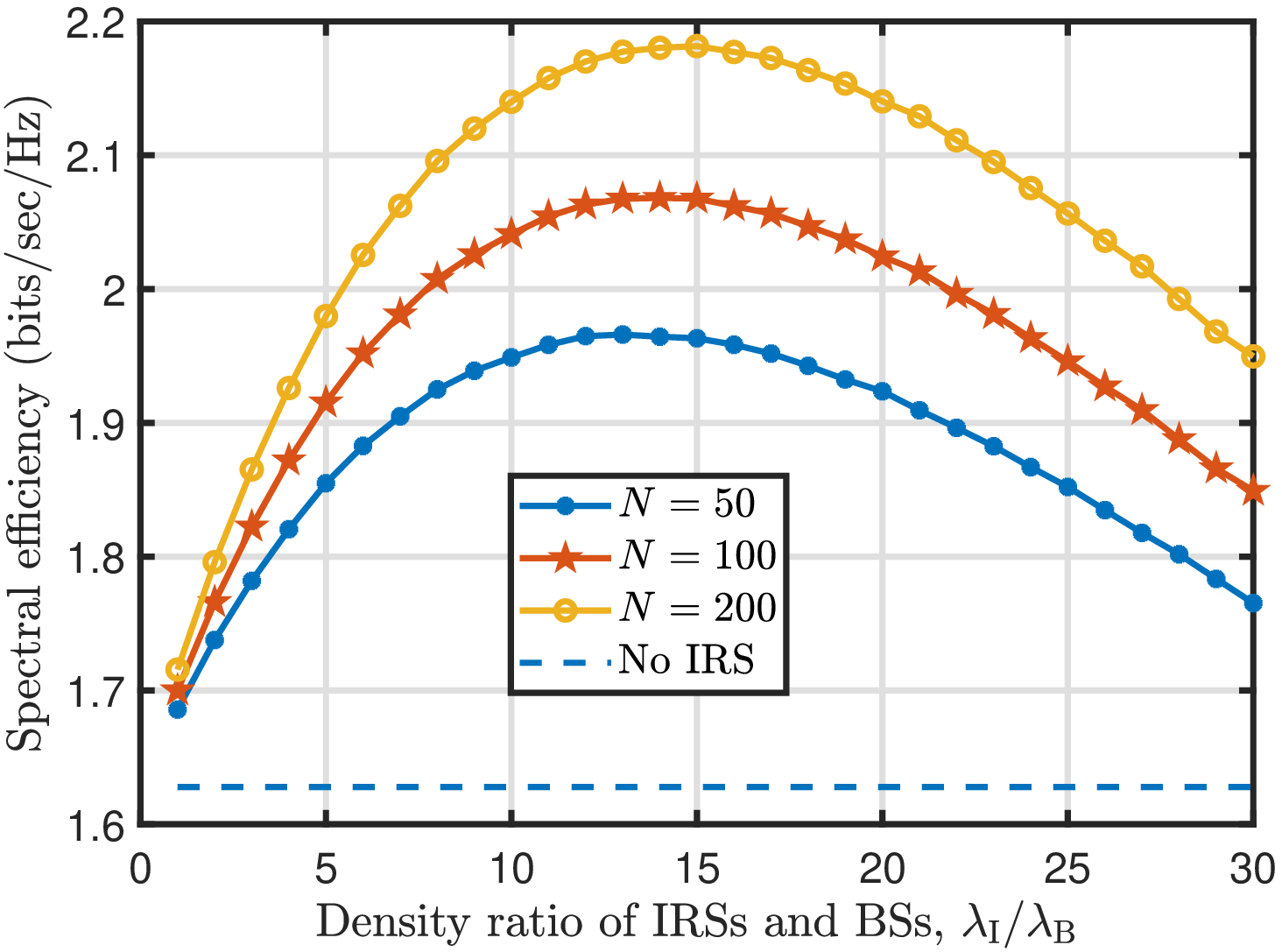}
			\end{minipage}
			\label{fig:spectralE_densityR}
		}~
		\subfigure[Tradeoff between the distance of IRS-UE and the number of IRS elements.]{
			\begin{minipage}[t]{0.49\textwidth}
				\includegraphics[width=\linewidth]{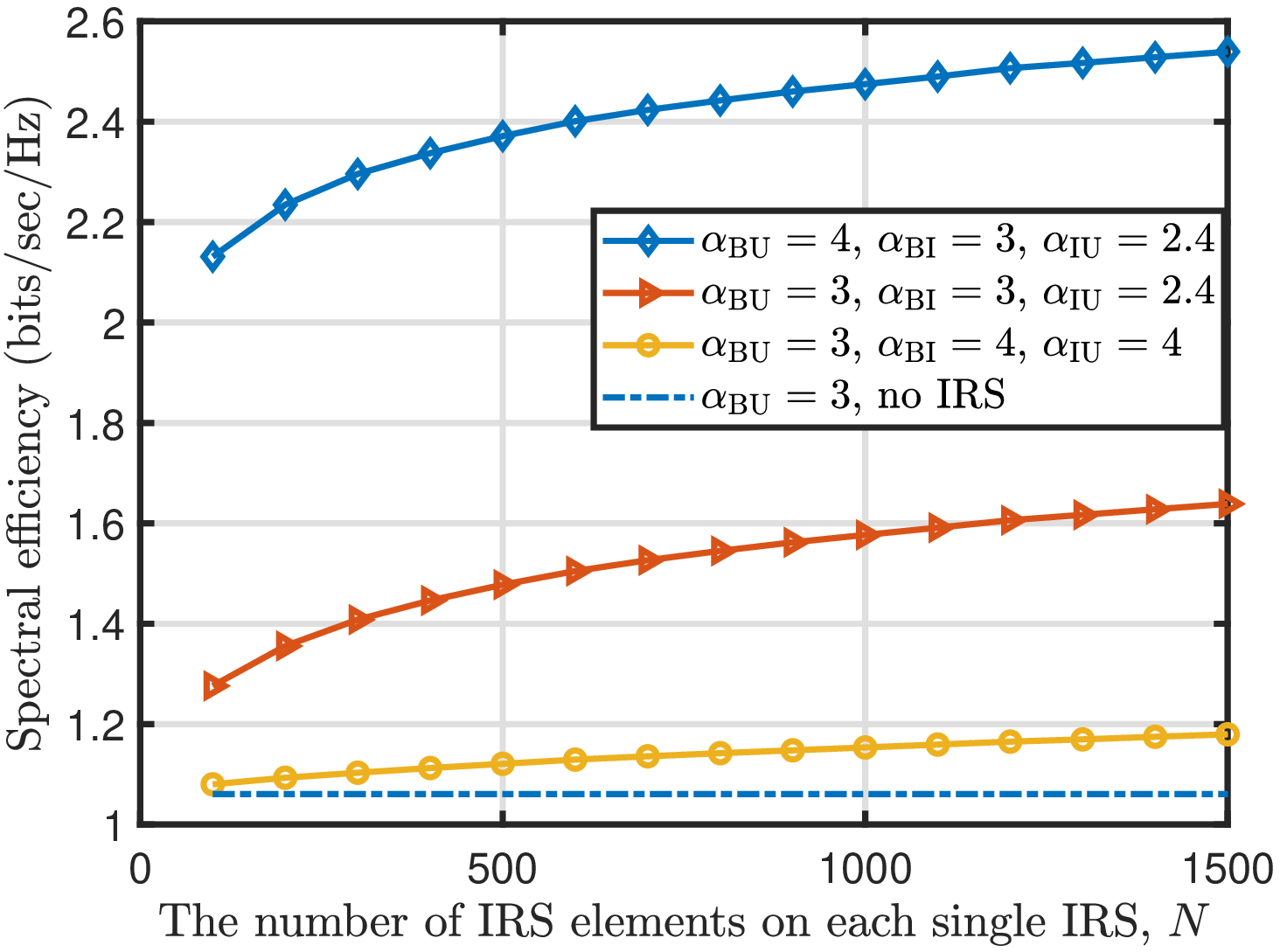}
			\end{minipage}
			\label{fig:spectralE_densityAlpha}
		}
    	\subfigure[ {Outage probability with different path-loss exponents, $\alpha$}.]{
    		\begin{minipage}[t]{0.49\textwidth}
    			\includegraphics[width=\linewidth]{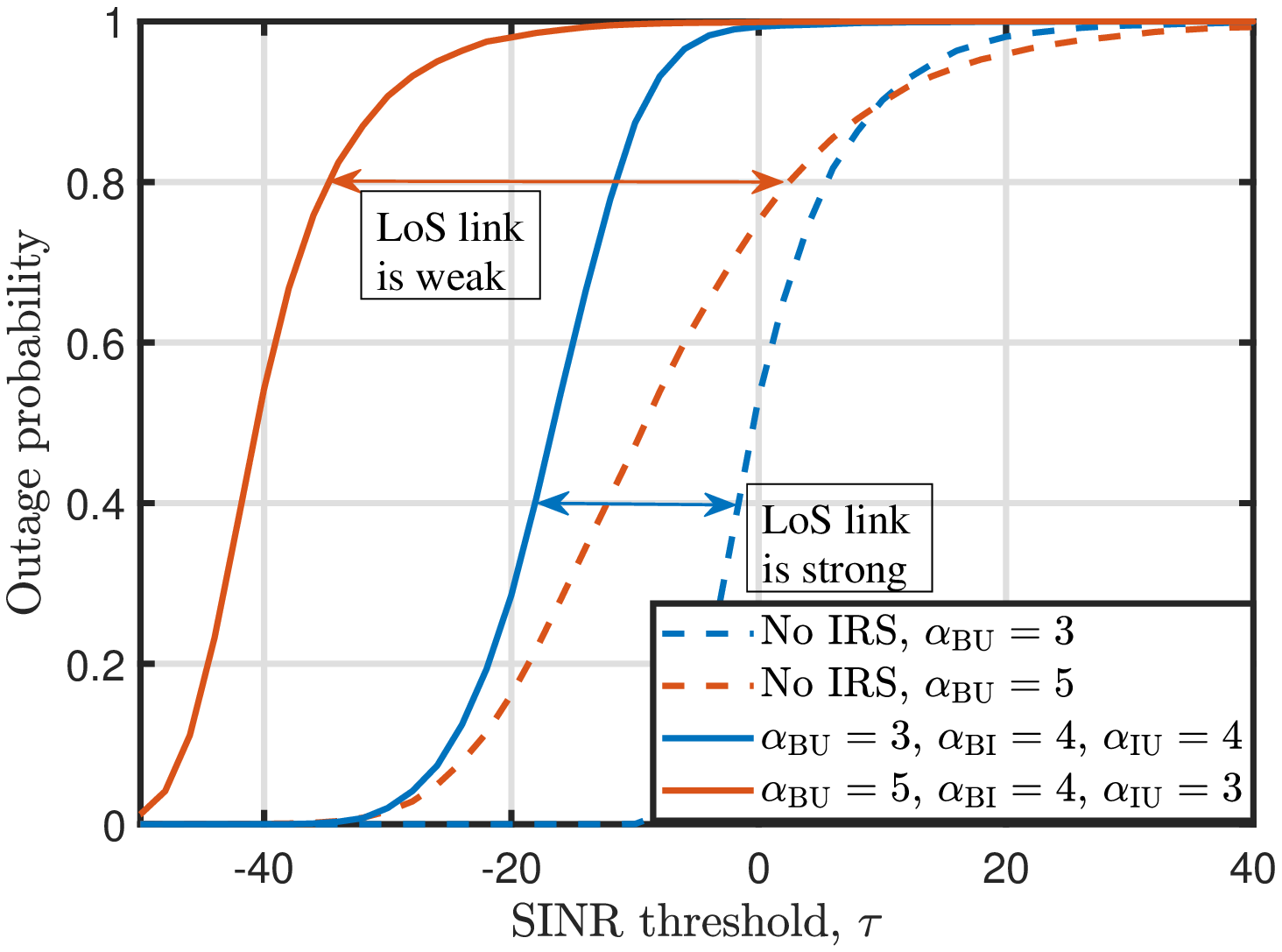}
    		\end{minipage}
    		\label{fig:outageP_diffAlpha}
    	}~
    	\subfigure[ {Outage probability with different IRS density, $\lambda_{\rm I}$, and number of IRS elements of each IRS, $N$}.]{
    		\begin{minipage}[t]{0.49\textwidth}
    			\includegraphics[width=\linewidth]{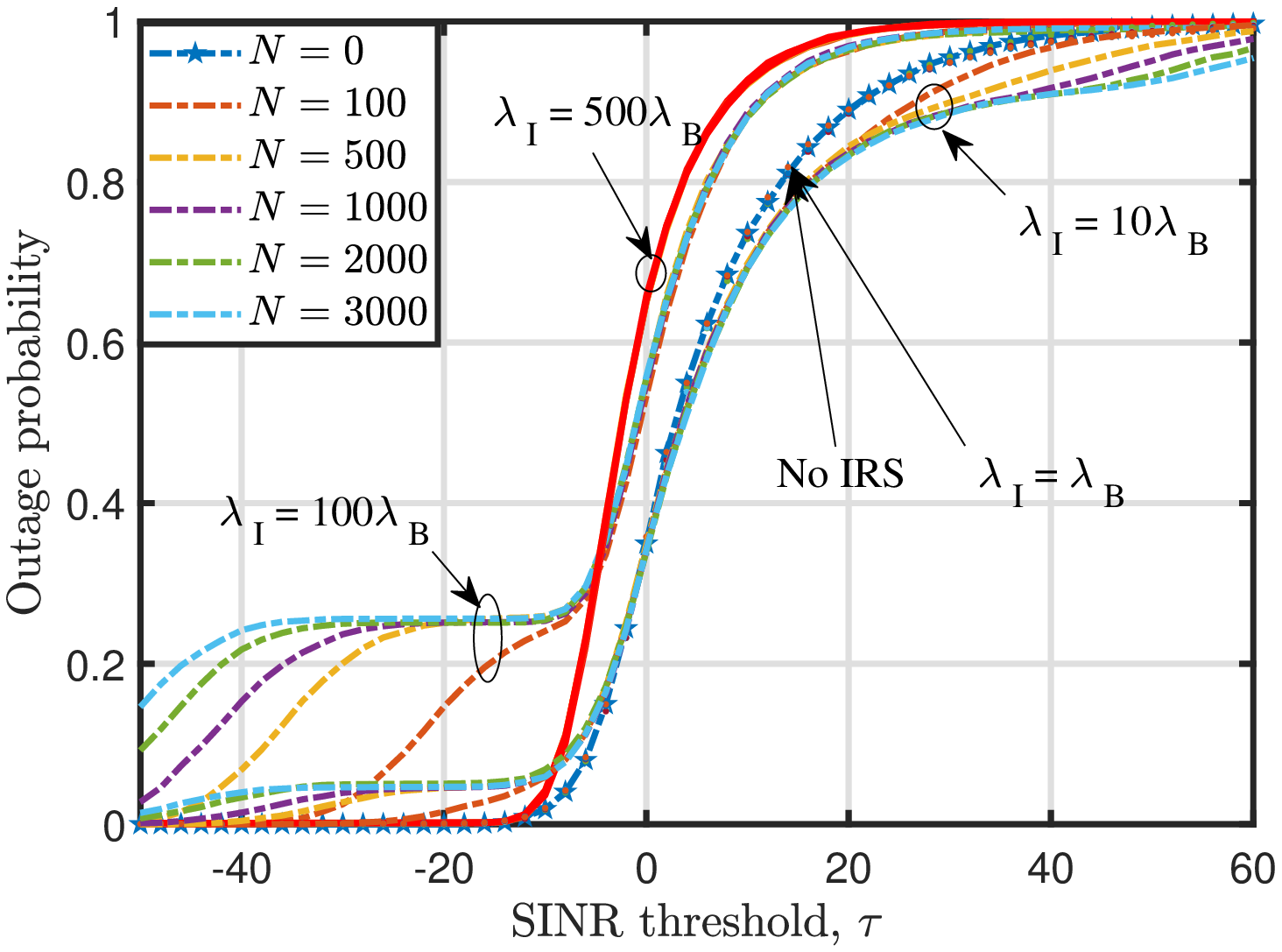}
    		\end{minipage}
    		\label{fig:outageP_diffIrsDensity}
    	}
		\caption{System performance analysis.}
		\label{fig:performance}
	\end{figure*}

	In this section, we introduced numerical results to verify the theoretical analysis. First, we displayed approximation results of cascaded channel gain and mixture channel gain compared with numerical results. Next, we illustrated CDF of the received signal distribution given link distance. Furthermore, the tradeoff between link distance and the number of IRS elements is provided. Finally, we showed the performance metrics. All of the simulations were carried out using MATLAB with the following parameters: BS density $\lambda_{\rm B}=1\times 10^{-5} {\rm /m^2}$, IRS density $\lambda_{\rm I}=1\times 10^{-4}{\rm /m^2}$, $D_1=25 \,{\rm m}$, $D_2=50\, {\rm m}$, unit transmit power $P_{\rm t} = 1\, {\rm Watt}$, noise power $\delta^{2}=-147$ dBm, and the number of elements of each IRS $N=500$. To better understand the mixture propagation environments and the system-level performance, the numerical analysis is carried out with Nakagami-$m$ fading channels as the underlying fading channel, if not specified otherwise.

	\subsection{Channel modeling of the channel gains}
	
	In Fig.~\ref{fig:MG_double}, we compared the mixture Gamma approximated PDF and CLT approximated PDF with the numerical PDF of the cascaded channel, from which we can observe that the PDF of mixture Gamma distribution fits well on cascaded fading channel gains compared with the CLT approximation, which validated the derivation in Theorem~\ref{theorem:Multipliability}. In addition, we displayed the minimum mean-square error (MMSE) of the mixture Gamma approximation of the cascaded channel gain with multiple fading types. In Fig.~\ref{fig:MG_mmse}, we can see that when the number of Gamma components of the mixture Gamma distribution is larger than 20, the accuracy can achieve $10^{-5}$ for most fading types. Moreover, in Fig.~\ref{fig:MG_mixture}, we compared the numerical PDF, mixture Gamma approximated PDF, and CLT approximated PDF of the mixture channel, which verified that the derivation of the PDF of the mixture channel in Theorem~\ref{theorem:Mixture}. At last, in Fig.~\ref{fig:MG_all_link}, the PDF of all types of channel gain are displayed, i.e., single channel gain, cascaded channel gain and mixture channel gain. We can observe that the cascaded channel gain is more concentrated than the single link while the mixture channel gain is more flat.

	\subsection{The received signal power distribution given link distances}
	
	In Fig.~\ref{fig:sigP_d_IU}, we displayed the CDF of the received signal power given the link distances. By varying the link distance between the typical UE and its serving IRS, $d_{\rm IU}$, we can observe that the benefit of decreasing $d_{\rm IU}$ is more significant when $d_{\rm IU}$ is smaller due to severe productive path loss, which is coincidence with \cite{lyu2021hybrid}. 
	Besides, in Fig.~\ref{fig:sigP_tradeoffND}, we illustrated the tradeoff between $d_{\rm IU}$ and the number of IRS elements of each IRS, $N$. The results are expected since the passive IRS suffers severe productive path loss, which hugely degrades the performance gain. We can see that about 8-fold the number of IRS elements is required to compensate for the path loss caused by large link distance.

	\subsection{System performance analysis}

	In Fig.~\ref{fig:spectralE_densityR}, the spectral efficiency versus the density ratio of IRSs and BSs, $ \lambda_{\rm I}/\lambda_{\rm B} $, with different numbers of IRS elements, $ N $, are displayed. We observe that the IRSs always provide spectral efficiency boosting. Besides, we can observe an optimal $ \lambda_{\rm I}/\lambda_{\rm B} $ under each $ N $, which slightly increases with the density ratio. This is expected since the IRSs enhance both signals and interference but with different scaling order \cite{lyu2021hybrid}. In Fig.~\ref{fig:spectralE_densityAlpha}, the effect of the path loss exponents of each link is displayed. We can observe that when the path loss of the BS$\rightarrow$IRS link and IRS$\rightarrow$UE link is severe, the performance improved by IRS is limited due to the productive path loss of the cascaded path. Besides, an interesting result is that worse BS$\rightarrow$UE link leads to a better performance, which is understandable since the interference is largely eliminated.

	Next, we displayed the outage probability in Fig.~\ref{fig:outageP_diffAlpha} and Fig.~\ref{fig:outageP_diffIrsDensity}. In Fig.~\ref{fig:outageP_diffAlpha}, we can observe that IRS boots the outage probability especially when the direct link is weak, which is the same as spectral efficiency, as shown in Fig.~\ref{fig:spectralE_densityAlpha}. In Fig.~\ref{fig:outageP_diffIrsDensity}, we studied the impact of density ratios of IRSs and BSs on outage probability. Surprisingly, the results differ from spectral efficiency, where the IRS definitely boosts the spectral efficiency. Several interesting observations are made as follows. First, when IRS density is small ($\lambda_{\rm I} = \lambda_{\rm B}$), the outage probability enhancement of IRS disappears. Second, when slightly increasing the IRS density, ($ \lambda_{\rm I}=10\lambda_{\rm B} $), although the outage probability at low SINR region is improved, within the high SINR region, the outage probability is decreased. The trend is that at low SINR region, larger number of IRS elements improves more significant enhancement of outage probability. In contrast, at the high SINR region, larger number of IRS elements decrease more significant enhancement of outage probability. This is reasonable since when amplifying the received signal, the interference is also enlarged. Third, when the IRS density is large enough ($\lambda_{\rm I} = 500 \lambda_{\rm B}$), the outage probability does not relied on the number of IRS elements on each IRS since the product path loss decreased due to small node distances. Fourth, if you want to increase the outage probability at low SINR, you can achieve this by increasing the IRS density and the number of IRS elements on each IRS smartly, such as, $\lambda_{\rm I} = 100\lambda_{\rm B}$ and $N=3000$, which can contribute to improving the reliability. However, at high SINR region, there is an optimal density ratio for improving the outage probability.

	\section{CONCLUSIONS AND DISCUSSIONS} \label{sectionConclusion}
	
	We proposed a uniform mixture Gamma channel modeling method for analyzing IRS-assisted wireless communication systems under any arbitrary fading environment and arbitrary number of links in this work. First, we proposed a method to approximate the channel gain of both the direct link and cascaded link independently over the mixture Gamma distributions for any arbitrary fading types by proving the multipliability of mixture Gamma distributed channels. Furthermore, the mixture channel is also modeled as a mixture Gamma distribution by proving the quadratic form of mixture Gamma distributed channels. Additionally, the fractional moments, Laplace transform, and CDF of mixture Gamma distribution are provided to facilitate the analysis. Second, we derived the mean conditional received signal power distribution and the Laplace transform of aggregated interference, under three operation modes: (a) one IRS associated with typical UE and other related IRSs scattering, (b) all related IRS randomly scattering, (c) no related IRS, given the distance between ${\rm BS}_{0}$ and ${\rm UE}_{0}$ and the distance between ${\rm IRS}_{0}$ and ${\rm UE}_{0}$. Finally, we introduced a uniform stochastic geometric system performance analysis framework based on the mixture Gamma distributed channels and derived the performance metrics availing of their corresponding SINR functions. In this way, the spectral efficiency, SINR moments, and outage probability are evaluated. Besides, the Monte-Carlo simulation verified the analysis and useful insights on system design are provided. We note that the properties of mixture Gamma distribution enormously facilitate the modeling and analysis in IRS-assisted networks with high accuracy. This modeling method also provides new insights on $K$ cascaded channels and $K$ mixture channels, which can be applied to multiple types of networks, such as active IRS networks, UAV networks, and relay networks.

	\appendices

	\section{} \label{appendix:CascadedMG}
    
	In this appendix, we provide a proof for Theorem~\ref{theorem:Multipliability}.
	In \cite{bhargav2018product}, the authors show that the PDF of a product of two random variables whose PDFs are linear combinations of Gamma distributions could be expressed by the Meijer-G function. Thus, the step (a) in \eqref{eq:fy} is achieved. However, the Meijer-G function is lack of tractability. As such,  by employing $G_{0,2}^{2,0}\left( y| b,c\right) = 2y^{\frac{1}{2}(b+c)}K_{b-c}(2\sqrt{y})$, we first simplified this to modified Bessel function in step (b). 
		\begin{equation}
		    \begin{split}
			f_{\rm Y}(y) \overset{(a)}{=} &~  \sum_{m_{1}=0}^{M_{1}} \sum_{{m_{2}}=0}^{M_{2}} \omega_{m_{1}} \omega_{m_{2}} \frac{\beta_{m_{1}} \beta_{m_{2}}}{\Gamma(\upsilon_{m_{1}} )\Gamma(\upsilon_{m_{2}})}G_{0,2}^{2,0}\left( y\beta_{m_{1}} \beta_{m_{2}} | \upsilon_{m_{1}} -1, \upsilon_{m_{2}}-1 \right)  
			\\ \overset{(b)}{=} &~  \sum_{{m_{1}} =0}^{M_{1}} \sum_{{m_{2}}=0}^{M_{2}}  \omega_{m_{1}}  \omega_{m_{2}} \frac{\beta_{m_{1}} \beta_{m_{2}} }{\Gamma(\upsilon_{m_{1}} )\Gamma(\upsilon_{m_{2}})}\times2(y\beta_{m_{1}} \beta_{m_{2}})^{\frac{1}{2}(\upsilon_{m_{1}} +\upsilon_{m_{2}})}  \times K_{\upsilon_{m_{1}} -\upsilon_{m_{2}}}(2\sqrt{y\beta_{m_{1}} \beta_{m_{2}}})
		    \end{split}\label{eq:fy},
		\end{equation}   
		Furthermore, the modified Bessel function $K_v(y)$ can be expressed as follows
		\begin{equation}
			K_{\upsilon_{m_{1}}-\upsilon_{m_{2}}}(2\sqrt{y\beta_{m_{1}}\beta_{m_{2}}}) =  \frac{1}{2}\left({y\beta_{m_{1}}\beta_{m_{2}}}\right)^{\frac{\upsilon_{m_{1}}-\upsilon_{m_{2}}}{2}}  \cdot \begin{matrix} \\ \underbrace{ \int_0^\infty \exp \left(-t-\frac{y\beta_{m_{1}}\beta_{m_{2}}}{t}\right){t^{-\upsilon_{m_{1}}+\upsilon_{m_{2}}-1}}\mathrm {d}t} \\I(t)\end{matrix} \label{appendix_eq:ModifiedBessel},
		\end{equation}
	where $I(t)$ can be derived as
	\begin{equation}
		I(t) = \int_0^\infty e^{-t}g(t)\mathrm {d}t, 
	\end{equation}        
	and $g(t)$ is given by:
	\begin{equation}
		g(t) = {\exp\left(-\frac{y\beta_{m_{1}}\beta_{m_{2}}}{t}\right){t^{-\upsilon_{m_{1}}+\upsilon_{m_{2}}-1}}}. 
	\end{equation}
	To solve this integration tractably, the modified Bessel function is approximated by Gaussian-Hermite functions with $\int_0^\infty e^{-t}g(t)dt \approx \sum_{i=0}^{I} \varpi_{i} g(t_i)$. As such, the PDF of  {$Y$} is achieved in \eqref{eq:fy1} with the restriction that absolute phase value of  {$y$} is no large than $\frac{1}{4}\pi$. 
	\begin{equation}
	    \begin{split}
	        f_{\rm Y}(y) =&~  \sum_{{m_{1}} =0}^{M_{1}} \sum_{{m_{2}}=0}^{M_{2}}  \frac{ \omega_{m_{1}}  \omega_{m_{2}}\beta_{m_{1}} \beta_{m_{2}} }{\Gamma(\upsilon_{m_{1}} )\Gamma(\upsilon_{m_{2}})}(y\beta_{m_{1}} \beta_{m_{2}})^{\upsilon_{m_{1}} }   \int_0^\infty {\exp\left(-t-\frac{y\beta_{m_{1}} \beta_{m_{2}}}{t}\right)\frac{1}{t^{\upsilon_{m_{1}} -\upsilon_{m_{2}}+1}}} \mathrm {d}t  \\ = &~ \sum_{i=1}^{I}  \sum_{{m_{1}} =0}^{M_{1}} \sum_{{m_{2}}=0}^{M_{2}} \omega_{m_{1}}  \omega_{m_{2}} \frac{(\beta_{m_{1}} \beta_{m_{2}})^{\upsilon_{m_{1}} +1} }{\Gamma(\upsilon_{m_{1}} )\Gamma(\upsilon_{m_{2}})}{t_{i}^{-\upsilon_{m_{1}} +\upsilon_{m_{2}}-1}} \varpi_{i} y^{\upsilon_{m_{1}} }e^{-\frac{\beta_{m_{1}} \beta_{m_{2}}}{t_{i}}y}
	    \end{split}, \label{eq:fy1} 
	\end{equation}
	With some mathematical simplifications, the double-link distribution PDF could be simplified as a mixture Gamma distribution as shown in \eqref{MGdouble}.  This completes the proof.
	

	\section{} \label{appendix:TheoremMixtureChannel}

	In this appendix, we provide a proof for Theorem~\ref{theorem:Mixture}. If $X^{2}$ follows mixture Gamma distribution, the PDF of $X$ is given by
    \begin{equation}\label{eq:X_naka}
        f_{\rm X}(x)=2\sum_{i=1}^{I}\varepsilon_{i}x^{2\beta_{i}-1}e^{-\xi_{i}x^{2}}.
    \end{equation}
    The PDF of $Z=X+Y$ can be derived using convolution and \eqref{eq:X_naka}, as shown in \eqref{eq:mixture1}. With some simple mathematical simplifications, (a) is achieved by invoking  $(a+x)^{n}=\sum_{k=0}^{n}\tbinom{n}{k}x^{k}a^{n-k}$
	\begin{small}
	\begin{equation}
	\begin{split}
	    f_{\rm Z}(z) = &~ \int_{0}^{z}{f_{\rm X}(x)f_{\rm Y}(z-x)}\mathrm{d}x  = \int_{0}^{z}{ \left (2\sum_{i=1}^{I}\varepsilon_{i}x^{2\beta_{i}-1}e^{-\xi_{i}x^{2}}\right)\left(2\sum_{j=1}^{J}\varepsilon_{j}(z-x)^{2\beta_{j}-1}e^{-\xi_{j}(z-x)^{2}}\right)} \mathrm{d}x \\ \overset{(a)}{=} &~ \sum_{i=1}^{I} \sum_{j=1}^{J} 4\varepsilon_{i} \varepsilon_{j}  \sum_{k_{1}=0}^{2\beta_{j}-1} \tbinom{2\beta_{j}-1}{k_{1}} (-1)^{k_{1}} z^{2\beta_{j}-k_{1}-1} e^{ z^{2}\left( \frac{\xi_{j}^{2}}{\xi_{i}+\xi_{j}} - \xi_{j}\right) }  \begin{matrix} \\ \underbrace{\int_{0}^{z} {x^{k_{1}+2\beta_{i}-1} e^{-(\xi_{i}+\xi_{j})\left(x-\frac{\xi_{j}}{\xi_{i}+\xi_{j}}z\right)^{2}}} \mathrm{d}x, } \\g^{'}(x)\end{matrix} 
	\end{split}\label{eq:mixture1}
	\end{equation}
	\end{small}%
	where $g^{'}(x)$ can be further derived by substituting $t = \left(x-\frac{\xi_{j}^{2}}{\xi_{i}+\xi_{j}}\right)^2$ and  \cite[eq.(3.381.2)]{zwillinger2007table}
	\begin{small}
	\begin{equation}
			\begin{split}
				g^{'}(x) = &~ \int_{0}^{\left(\frac{\xi_{j}}{\xi_{i}+\xi_{j}}z\right)^2} {\left(\frac{\xi_{j}}{\xi_{i}+\xi_{j}}z-\sqrt{t}\right)^{k_{1}+2\beta_{i}-1} }\frac{e^{-(\xi_{i}+\xi_{j})t}}{2\sqrt{t}} + {\left(\frac{\xi_{j}}{\xi_{i}+\xi_{j}}z+\sqrt{t}\right)^{k_{1}+2\beta_{i}-1} }\frac{e^{-(\xi_{i}+\xi_{j})t}}{2\sqrt{t}} \mathrm{d}t \\ = &~ \frac{1}{2}\sum_{k_{2}=0}^{2\beta_{i}-1+k_{1}}\tbinom{2\beta_{i}-1+k_{1}}{k_{2}}\frac{(\xi_{j}z)^{2\beta_{i}-1+k_{1}-k_{2}}}{(\xi_{i}+\xi_{j})^{2\beta_{i}+k_{1}-\frac{k_{2}+1}{2}}}\left[ \gamma\left( \frac{k_{2}+1}{2},\frac{\xi_{j}^{2}z^2}{\xi_{i}+\xi_{j}} \right) + \gamma\left( \frac{k_{2}+1}{2},\frac{\xi_{i}^{2}z^2}{\xi_{i}+\xi_{j}} \right)\right].
			\end{split}\label{eq:mixture_gx}
		\end{equation}	
	\end{small}%

	    Then, by substituting \eqref{eq:mixture_gx} into \eqref{eq:mixture1}, the PDF of $Z$ is derived as follows
		\begin{equation}
		\begin{split}
			 f_{\rm Z}(z) = &~   \sum_{i=1}^{I} \sum_{j=1}^{J} 2\varepsilon_{i} \varepsilon_{j}e^{ - \frac{\xi_{i}\xi_{j}}{\xi_{i}+\xi_{j}}z^{2} }  \sum_{k_{1}=0}^{2\beta_{j}-1} \tbinom{2\beta_{j}-1}{k_{1}} (-1)^{k_{1}}  \sum_{k_{2}=0}^{2\beta_{i}+k_{1}-1}\tbinom{2\beta_{i}+k_{1}-1}{k_{2}} z^{2\beta_{i}+2\beta_{j}-2-k_{2}} \\ &~ \cdot \frac{\xi_{j}^{2\beta_{i}-1+k_{1}-k_{2}}}{(\xi_{i}+\xi_{j})^{2\beta_{i}+k_{1}-\frac{k_{2}+1}{2}}} \left[(-1)^{k_{2}} \gamma\left(\frac{k_{2}+1}{2},\frac{\xi_{j}^{2}z^{2}}{\xi_{i}+\xi_{j}} \right) + \gamma\left(\frac{k_{2}+1}{2},\frac{\xi_{i}^{2}z^{2}}{\xi_{i}+\xi_{j}} \right) \right] \\ \overset{(a)}{=} &~ \sum_{i=1}^{I} \sum_{j=1}^{J} 2\varepsilon_{i} \varepsilon_{j} \sum_{k_{1}=0}^{2\beta_{j}-1} \tbinom{2\beta_{j}-1}{k_{1}} (-1)^{k_{1}}  \sum_{k_{2}=0}^{2\beta_{i}+k_{1}-1}\tbinom{2\beta_{i}+k_{1}-1}{k_{2}} \frac{\Gamma(\frac{k_{2}+1}{2}){\xi_{j}^{2\beta_{i}-1+k_{1}-k_{2}}}z^{2\beta_{i}+2\beta_{j}+2k_{3}-2}}{\Gamma(\frac{k_{2}+1}{2}+k_{3}+1){(\xi_{i}+\xi_{j})^{2\beta_{i}+k_{1}+k_{3}}}}  \\ &~ \cdot \left[ (-1)^{k_{2}}\xi_{j}^{k_{2}+1+2k_{3}}e^{-\xi_{j}z^{2}} +\xi_{i}^{k_{2}+1+2k_{3}}e^{-\xi_{i}z^{2}} \right] ,
		\end{split}\label{eq:MGcomposite}
		\end{equation}
        where we applied the power series expansion of incomplete Gamma function in step (a). Last, we apply $f_{\rm S}(s) = \frac{1}{2}s^{-\frac{1}{2}}f_{\rm Z}(\sqrt{s})$ to derive the PDF of $S$. This completes the proof.

    \section{} \label{appendix:LaplaceI}

        This appendix provides derivation of $ \mathcal{L}_{\rm I_{{\rm F},1}}| _{d_{\rm BIU}^{(0)}} $ and $ \mathcal{L}_{\rm I_{{\rm F},2}}| _{d_{\rm BIU}^{(0)}} $. Similar to \cite{chun2017comprehensive}, the Laplace transform of $I_{{\rm F},1}$ is derived in \eqref{eq:L_I1_appen}. In step (a), a substitution $t=sH_{\rm BU}{d_{\rm BU}}^{-\alpha}$ is applied. Next, in step (b) the integration is achieved through integration by parts. Then, the last step is achieved with some mathematical simplifications.
        \begin{small}
        \begin{equation}\label{eq:L_I1_appen}
        	\begin{split} 
        		\mathcal{L}_ {\rm I_{{\rm F},1}} |_{d_{\rm BU}^{(0)}} =&~ \exp\left( -2\pi\lambda_{\rm B}\int_{{d_{\rm BU}^{(0)}}}^{\infty} \left( 1-\mathbb{E}_{\rm H} \left[ e^{-sH_{\rm BU}d_{\rm BU}^{-\alpha}} \right]  \right)d_{\rm BU}{\rm d}d_{\rm BU}  \right) \\ \overset{(a)}{=} &~ \exp\left( -2\pi\lambda_{\rm B}  \frac{(sH_{\rm BU})^{\frac{2}{\alpha}}}{\alpha}\mathbb{E}_{\rm H} \left[\int_{0}^{sH_{\rm BU}{d_{\rm BU}^{(0)}}^{-\alpha}} \left( 1-e^{-t} \right) t^{-1-\frac{2}{\alpha}} {\rm d} t   \right] \right) \\ \overset{(b)}{=} &~ \exp\left( -2\pi\lambda_{\rm B}  \frac{(sH_{\rm BU})^{\frac{2}{\alpha}}}{2}\mathbb{E}_{\rm H} \left[ -t^{-\frac{2}{\alpha} }\left(1-e^{-t}\right)\Big|_{0}^{sH_{\rm BU}} +  \int_{0}^{sH_{\rm BU}{d_{\rm BU}^{(0)}}^{-\alpha}} t^{-\frac{2}{\alpha}} e^{-t}  {\rm d} t  \right] \right)   \\ = &~ \exp \left(  -\pi\lambda_{\rm B}{d_{\rm BU}^{(0)}}^{2}  \sum_{k=0}^{\infty} (-1)^{k}\frac{m_{\rm BU}^{-1-k}\left(s{d_{\rm BU}^{(0)}}^{-\alpha}\right)^{1+k}}{\Gamma(m_{\rm BU}  )k!(1-\frac{2}{\alpha}+k)} \Gamma(1+k+m_{\rm BU}  )   \right)  \\  &~ \cdot \exp \left(  -\pi\lambda_{\rm B}{d_{\rm BU}^{(0)}}^{2} (s\eta)^{\frac{2}{\alpha}}\frac{m_{\rm BU}  ^{m_{\rm BU}  }}{\Gamma(m_{\rm BU}  )} \Gamma\left(\frac{2}{\alpha}+m_{\rm BU}  \right)\left( m_{\rm BU}  ^{\frac{2}{\alpha}+m_{\rm BU}  } - (s\eta {d_{\rm BU}^{(0)}}^{-\alpha}+m_{\rm BU}   )^{\frac{2}{\alpha}+m_{\rm BU}  } \right)  \right).
        	\end{split} 
        \end{equation}        
        \end{small}%
        Following similar procedure, $ \mathcal{L}_{\rm I_{{\rm F},2}}| _{d_{\rm BIU}^{(0)}} $ can be derived as \eqref{eq:L_I2_appen}. 
\begin{small}
	\begin{equation}\label{eq:L_I2_appen}
		\begin{split}
			\mathcal{L}_{\rm I_{{\rm F},2}}| _{d_{\rm BIU}^{(0)}}=&~ \mathbb{E}_{\Lambda _{B}\setminus\{0\}}\left\{ e^{-s \eta H_{\rm BIU}} \right\}\Big|_{d_{\rm BU}^{(0)}} \\ = &~ {\rm exp}\left(-2\pi \lambda_{\rm B}  \int_{0}^{D_{2}}\int_{d_{\rm BU}^{(0)}}^{\infty}\left({1-\mathbb{E}_{\rm H}\left[ e^{-s \eta H_{\rm BIU} } \right]}\right)d_{\rm BU} \,\,\mathrm {d} d_{\rm BU} \,\, d_{\rm IU} f_{\rm d_{\rm IU}}( d_{\rm IU} )\,\,\mathrm {d} d_{\rm IU}  \right)  \\ = &~ \exp \Bigg(  -\pi\lambda_{\rm B} \int_{0}^{D_{2}} \mathbb{E}_{\rm H} \Big[ \left(s\eta H_{\rm BIU}\right)^{\frac{2}{\alpha}} \gamma\left(1-\frac{2}{\alpha},s\eta H_{\rm BIU}({d_{\rm BU}^{(0)}}d_{\rm IU})^{-\alpha}\right)  \\&~  - \left({d_{\rm BU}^{(0)}} d_{\rm IU}\right)^{2} \left(1-e^{-s\eta H_{\rm BIU} ({d_{\rm BU}^{(0)}} d_{\rm IU})^{-\alpha} }\right)  \Big]   f_{\rm d_{\rm IU}}( d_{\rm IU} )\,\,\mathrm {d} d_{\rm IU}  \Bigg).
		\end{split}
	\end{equation} 
\end{small}%

    \section{} \label{appendix:Perf}
        
        In this appendix, we provide a proof for Theorem~\ref{theoremPerfAnalysis}. In \eqref{ESINR}, $\mathbb{E}[g({\rm SINR})]$ is derived. The step (a) is achieved by substitute $z$ with $z=\frac{H_{\rm S}}{I_{\rm F}+\delta^{2}}$. Then the step (b) is achieved by substitute $b$ with $b=\xi_{i} {({I_{\rm F}+\delta^{2}})}$ 
     \begin{small}
        \begin{equation}
        \begin{split}
        	\mathbb{E}[g( {\rm SINR})] = &~ \int_{0}^{\infty} {g( {\rm SINR})f_{\rm H_{\rm S}}(x)}\mathrm{d}x = \int_{0}^{\infty} {g\left( \frac{H_{\rm S}}{I_{\rm F}+\delta^{2}} \right) \sum_{i=0}^{N}}{\varepsilon_{i}H_{\rm S}e^{-\xi_{i}H_{\rm S}}}\mathrm{d}H_{\rm S}  \\ \overset{(a)}{=} &~ \sum_{i=0}^{N}{\varepsilon_{i} {({I_{\rm F}+\delta^{2}}) }^{\beta_{i}} \int_{0}^{\infty} {g(z)z^{\beta_{i}-1} e^{-\xi_{i} z {({I_{\rm F}+\delta^{2}})}}}\mathrm{d}z }   \\ \overset{(b)}{=} &~ \sum_{i=0}^{N} \varepsilon_{i} \Gamma(\beta_{i}) {\xi_{i}}^{-\beta_{i}} \begin{matrix} \\ \underbrace{  \int_{0}^{\infty} {g(z) \frac{z^{\beta_{i}-1}}{\Gamma(\beta_{i})}  b^{\beta_{i}} e^{-b {\xi_{i}} z}}\mathrm{d}z }  \\Q\end{matrix} \\ = &~ \sum_{i=0}^{N} \varepsilon_{i} \Gamma(\beta_{i}) {\xi_{i}}^{-\beta_{i}}  \int_{0}^{\infty} {g_{\beta_{i}}(z) e^{-\delta^{2} {\xi_{i}} z}} \mathcal{L}_{\rm I_{\rm F}}( {\xi_{i}}z)  \mathrm{d}z.  
        	\end{split}\label{ESINR} 
            \end{equation}          
     \end{small}%
     Next, $Q$ is evaluated in \eqref{Q} by utilizing the partial integral as follows 
        \begin{equation}
            Q = -\sum_{k=0}^{\beta_{i}-1}g_{k}(z)b^{\beta_{i}-k-1}e^{-bz}|_{0}^{\infty} + \int_{0}^{\infty} {g_{\beta_{i}}(z) e^{-b  z}}\mathrm{d}z. \label{Q} 
        \end{equation}
    This completes the proof.
 \section{} \label{appendix:Distance}
 
 In this appendix, we evaluated the PDF and CDF of the distance between IRS and BS. 
 \begin{figure}[t!]
 	\centering
 	\includegraphics[height=6.22cm,width=8cm]{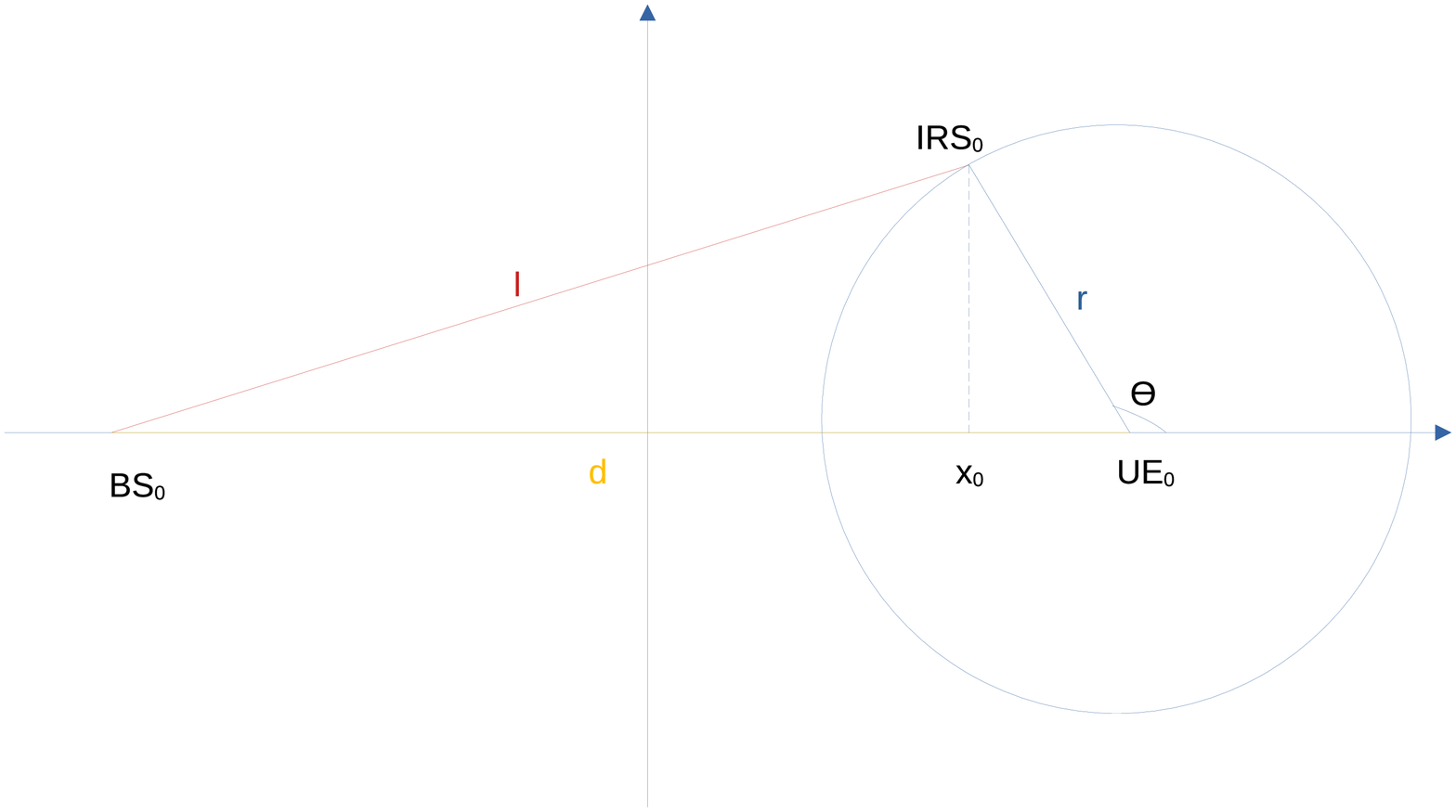}
 	\caption{Distance Model.}
 	\label{fig:model}
 \end{figure}
 As illustrated in Fig.~\ref{fig:model}, for given $r$ and $d$, ${\rm IRS}_{0}$ is located on a circle with $\rm UE_0$ as the center and $r$ as the radius. According to their geometric relationship, we can get the following equation
 
 \begin{equation}
 	l^2 = r^2 + d^2 + 2dr \cos\theta . 
 \end{equation}
 Assuming that $\theta$ is uniformly distributed over $[0, 2\pi]$, then the conditional PDF of $y = \cos \theta$ is given by
 
 \begin{equation}
 	f_{\rm Y}(y)|_{d,r} =  \left\{ \begin{array}{rcl}
 		\frac{1}{\pi \sqrt{1-y^2}},       &      & {y\in [-1,1]},\\
 		0,     & & \text{otherwise}.	\end{array} \right  .  
 \end{equation}
 Next,the conditional CDF of $l=\sqrt{2dry+r^2+d^2}$ is derived as 
 \begin{equation}
 	\begin{split}
 		F_{\rm L}(l)|_{d,r} = &~ P(L \leq l) = P(Y \leq \frac{l^2-r^2-d^2}{2dr}) \\ = &~ \int_{0}^{\frac{l^2-r^2-d^2}{2dr}} \frac{1}{\pi \sqrt{1-y^2}} \mathrm{d}y = \frac{1}{\pi} \arcsin\left( \frac{l^2-r^2-d^2}{2dr} \right) + \frac{1}{2}.
 	\end{split} 
 \end{equation}
 By taking the derivative of CDF, we can get its PDF
 
 \begin{equation}
 	\begin{split}
 		f_{\rm L}(l)|_{d,r} = \frac{\mathrm{d}}{\mathrm{d}l} F_{\rm L}(l) = \frac{l/dr}{\pi \sqrt{1-\left( \frac{l^2-r^2-d^2}{2dr} \right) ^2}},
 	\end{split} 
 \end{equation}
 and by substituting $x=\frac{l^2}{2dr}$ and $a=\frac{r^2+d^2}{2dr}$, its mean can be achieved as follows
 \begin{small}
 \begin{equation}\label{eq:mean} 
 \begin{split}
 	\mathbb{E}[l]|_{d,r} = &~ \frac{1}{\pi} \int_{d-r}^{d+r} \frac{l^2/dr}{\sqrt{1-\left( \frac{l^2-r^2-d^2}{2dr} \right) ^2}} \mathrm{d}l \overset{(a)}{=} \frac{2}{\pi} \int_{a-1}^{a+1}  \frac{x}{\sqrt{1-\left( x-a \right) ^2}} \mathrm{d}x  \\ = &~ \frac{2}{\pi} \Bigg[ \int_{a-1}^{a+1} \frac{x-a}{\sqrt{1-\left( x-a \right) ^2}} \mathrm{d}x + \int_{a-1}^{a+1} \frac{a}{\sqrt{1-\left( x-a \right) ^2}} \mathrm{d}x \Bigg]\notag \\ = &~ 2a = \frac{r^2+d^2}{dr}.
 \end{split} 
 \end{equation} 
 \end{small}%
 The integral of step (a) can be derived by two parts as shown in \eqref{eq:itegral1} and \eqref{eq:itegral2}, where the result in \eqref{eq:itegral1} is achieved by substituting $u=x-a$
 \begin{small}
  \begin{equation}
 	\begin{split}
 	 \int_{a-1}^{a+1} \frac{x-a}{\sqrt{1-\left( x-a \right) ^2}} \mathrm{d}x = \int_{-1}^{1} \frac{u}{\sqrt{1-u^2}} \mathrm{d}u = -\sqrt{1-u^2}|_{-1}^{1} = 0,
 	\end{split} \label{eq:itegral1}  
 \end{equation}
  \begin{equation}
 	\begin{split}
 		\int_{a-1}^{a+1} \frac{a}{\sqrt{1-\left( x-a \right) ^2}} \mathrm{d}x = a\cdot \arcsin (x-a)|_{a-1}^{a+1} = a\pi. 
 	\end{split} \label{eq:itegral2} 
 \end{equation}    
 \end{small}%
 The result $\mathbb{E}[l]|_{d,r}=\frac{r^2+d^2}{dr}$ does match with our assumption, $l \approx d$ when $d \gg r$, assuming that $r>1$. This completes the proof.

\section{} \label{appendix:IterationAlgorithmForChannel}

In this appendix, we provided the iteration algorithm for obtaining the PDF of the $K$ cascaded channels in Algorithm~\ref{alg:A},
\begin{algorithm}
	\caption{The PDF of $K$-cascaded channel gain}
	\label{alg:A}
	\begin{algorithmic}[1] 
		\Require The mixture Gamma distribution parameters of each link: $\Theta_{k} \colon = \{ \varepsilon_{m_{k}}, \beta_{m_{k}}, \xi_{m_{k}},M_{k} \}$, $M_{k}$ is the number of Gamma terms of the $k$-th link, $k=1,2,...,K$, $m_{k}=1,2,...,M_{k}$
		\Function{Cascade}{ {$\Theta_{k}$}
		}
		\For{$m_{1} = 1 \to M_{1}$}
		\For{$m_{2} = 1 \to M_{2}$}
		\For{$i = 1 \to I$}
		\State $\varepsilon_{m_{1},m_{2},i}^{(2)} = \frac{\varepsilon_{m_{1}}\varepsilon_{m_{2}}(\xi_{m_{1}}\xi_{m_{2}})^{\beta_{m_{1}}}}{\xi_{m_{1}}^{\beta_{m_{1}}}\xi_{m_{2}}^{\beta_{m_{2}}}}\varpi_{i}t_{i}^{-\beta_{m_{1}}+\beta_{m_{2}}-1}$, 
		\State $\beta_{m_{1},m_{2},i}^{(2)} = \varepsilon_{m_{1}}$, 
		\State $\xi_{m_{1},m_{2},i}^{(2)} =  \frac{\xi_{m_{1}}\xi_{m_{2}}}{t_{i}}$
		\EndFor
		\EndFor
		\EndFor
		\State $M^{(2)}=M_{1}M_{2}I$
		\For{$k=3\to K$}
		\For{$m_{1} = 1 \to M^{(k-1)}$}
		\For{$m_{2} = 1 \to N_{k}$}
		\For{$i = 1 \to I$}
		\State $\varepsilon_{m_{1},m_{2},i}^{(k)} = \frac{\varepsilon_{m_{1}}^{(k-1)}\varepsilon_{m_{k}}(\xi_{m_{1}}^{(k-1)}\xi_{m_{k}})^{\beta_{m_{1}}^{(k-1)}}}{{\xi_{m_{1}}^{(k-1)}}^{\beta_{m_{1}}^{(k-1)}}\xi_{m_{k}}^{\beta_{m_{k}}}}\varpi_{i}t_{i}^{-\beta_{m_{1}}+\beta_{m_{2}}-1}$, 
		\State $\beta_{m_{1},m_{2},i}^{(k)} = \varepsilon_{m_{1}}^{(k-1)}$, 
		\State $\xi_{m_{1},m_{2},i}^{(k)} = \frac{\xi_{m_{1}}^{(k-1)}\xi_{m_{2}}^{(k-1)}}{t_{i}}$
		\EndFor
		\EndFor
		\EndFor
		\State $M^{(k)}=M^{(k-1)}M_{k}I$
		\EndFor
		\State \Return{$\Theta^{(k)}\colon=\{\varepsilon^{(k)},\beta^{(k)},\xi^{(k)},M^{(k)} \}$}
	\EndFunction
\end{algorithmic}
\end{algorithm}
Moreover, the iteration algorithm can be straightforwardly obtained following a similar iteration procedure as Algorithm~\ref{alg:A}.

    \begin{figure*}
        \centering
        \includegraphics[angle=90,scale=0.55]{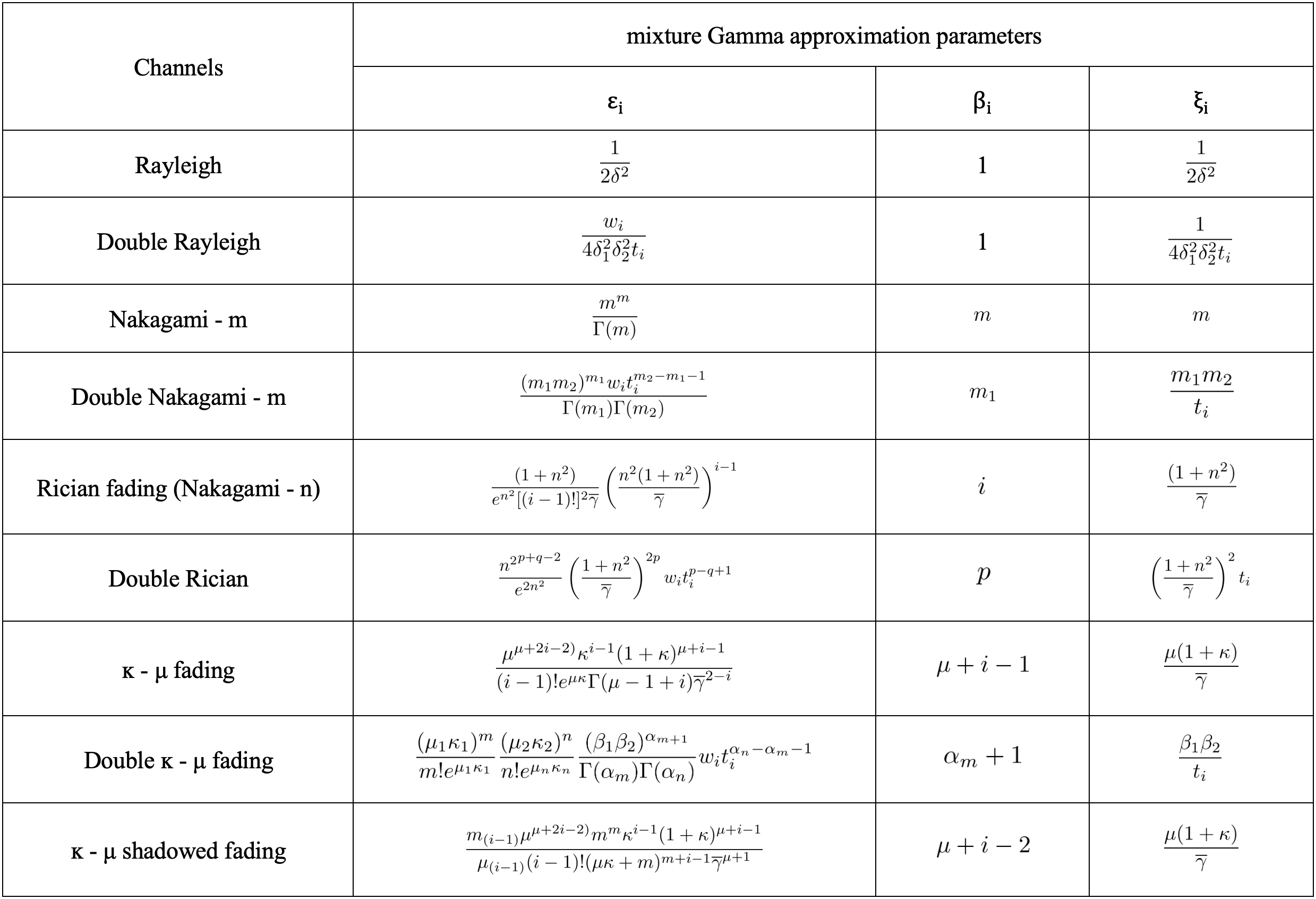}
        \caption{Multiple channels and their mixture Gamma approximation table}
        \label{fig:table}
    \end{figure*}
        
\bibliographystyle{IEEEtran}  
\bibliography{references}






\end{document}